\begin{document}
\frontmatter          
\pagestyle{headings}  
\mainmatter              
\title{Essential Constraints of Edge-Constrained Proximity Graphs\thanks{This work was partially supported by NSERC.}
\thanks{A preliminary version of this paper appeared in the Proceedings of 27th International Workshop, IWOCA 2016, Helsinki, Finland.}}
\titlerunning{Edge-Constrained Proximity Graphs}  
%
\author{Prosenjit Bose\inst{1} \and Jean-Lou De Carufel\inst{2} \and Alina Shaikhet\inst{1} \and Michiel Smid\inst{1}}
\authorrunning{P. Bose, J.-L. De Carufel, A. Shaikhet and M. Smid} 

\institute{School of Computer Science, Carleton University, Ottawa, Canada,\\
\email{\{jit, michiel\}@scs.carleton.ca, alina.shaikhet@carleton.ca},\\
\and
School of Electrical Engineering and Computer Science, U. of Ottawa, Canada,\\
\email{jdecaruf@uottawa.ca}}

\maketitle              

\begin{abstract}
Given a plane forest $F = (V, E)$ of $|V| = n$ points, we find the minimum set $S \subseteq E$ of edges such that the edge-constrained minimum spanning tree over the set $V$ of vertices and the set $S$ of constraints contains $F$. We present an $O(n \log n )$-time algorithm that solves this problem.
We generalize this to other proximity graphs in the constraint setting, such as the \emph{relative neighbourhood graph}, \emph{Gabriel graph}, $\beta$\emph{-skeleton} and \emph{Delaunay triangulation}.


We present an algorithm that identifies the minimum set $S\subseteq E$ of edges of a given plane graph $I=(V,E)$ such that $I \subseteq CG_\beta(V, S)$ for $1 \leq \beta \leq 2$, where $CG_\beta(V, S)$ is the constraint $\beta$-skeleton over the set $V$ of vertices and the set $S$ of constraints. The running time of our algorithm is $O(n)$, provided that the constrained Delaunay triangulation of $I$ is given. 
\keywords{proximity graphs, constraints, visibility, MST, Delaunay, $\beta$-skeletons}
\end{abstract}
\section{Introduction}
\label{sec:introduction}

This paper was inspired by topics in geometric compression. In particular, Devillers et~al.~\cite{DBLP:journals/ijcga/DevillersEGHRS03} investigate how to compute the minimum set $S \subseteq E$ of a given plane triangulation $T = (V,E)$, such that $T$ is a constrained Delaunay triangulation $(DT)$ of the graph $(V,S)$. They show that $S$ and $V$ is the only information that needs to be stored. The graph $T$ can be successfully reconstructed from $S$ and $V$. Experiments on real data sets (such as terrain models and meshes) show that the size of $S$ is less than $3.4\%$ of the total number of edges of $T$, which yields an effective compression of the triangulation.

Our goal is to broaden this research and investigate geometric compression of other neighbourhood graphs. We study minimum spanning trees, relative neighbourhood graphs, Gabriel graphs and $\beta$-skeletons for $1 \leq \beta \leq 2$. We give a definition of each of those graphs in the constraint setting (refer to Sect.~\ref{sec:basic_definitions}). 

Minimum spanning trees $(MST)$ have been studied for over a century and have numerous applications. We study the problem of finding the minimum set $S$ of constraint edges in a given plane forest $F = (V, E)$ such that the edge-constrained $MST$ over the set $V$ of vertices and the set $S$ of constraints contains~$F$. If $F$ is a plane tree then the edge-constrained $MST$ over $(V,S)$ is equal to $F$. We give an $O(n \log n )$-time algorithm that solves this problem.

Gabriel graphs $(GG)$ were introduced by Gabriel and Sokal in~\cite{GabrielSokal1969}. 
Toussaint introduced the notion of relative neighbourhood graphs ($RNG$) in his research on pattern recognition~\cite{DBLP:journals/pr/Toussaint80}. Both graphs were studied extensively.

Jaromczyk and Kowaluk showed that $RNG$ of a set $V$ of points can be constructed from the Delaunay triangulation of $V$ in time $O(n \alpha(n, n))$, where $\alpha(\cdot)$ is the inverse of the Ackerman function~\cite{Jaromczyk:1987:NRN:41958.41983}. These two authors, together with Yao, improved the running time of their algorithm to linear~\cite{Jaromczyk:1989}. They achieved it by applying a static variant of the \emph{Union-Find} data structure. They also generalized their algorithm to construct the $\beta$-skeleton $(G_{\beta})$ for $1 \leq \beta \leq 2$ in linear time from the Delaunay triangulation of $V$ under the $L_p$-metric, $1 < p < \infty$. We provide the definition of $\beta$-skeleton in Sect.~\ref{sec:basic_definitions}. For now, note that the $1$-skeleton corresponds to the Gabriel graph and the $2$-skeleton corresponds to the relative neighbourhood graph. In this paper, we use two geometric structures: elimination path and elimination forest, introduced by Jaromczyk and Kowaluk~\cite{Jaromczyk:1987:NRN:41958.41983}.

Neighbourhood graphs are known to form a nested hierarchy, one of the first versions of which was established by Toussaint~\cite{DBLP:journals/pr/Toussaint80}: for any $1 \leq \beta \leq 2$, $ MST \subseteq RNG \subseteq G_{\beta} \subseteq GG  \subseteq DT$.
We show that the neighbourhood graphs in the constraint setting form the same hierarchy. Moreover, we show that the minimum set of constraints required to reconstruct a given plane graph (as a part of each of those neighbourhood graphs) form an inverse hierarchy.

In Sect.~\ref{sec:basic_definitions}, we present notations and definitions. In Sect.~\ref{sec:CMST algorithm}, we give some observations concerning constrained $MST$, show worst-case examples and present an $O(n \log n )$-time algorithm that identifies the minimum set $S \subseteq E$ of constraint edges given a plane forest $F = (V, E)$ such that the edge-constrained $MST$ over the set $V$ of vertices and the set $S$ of constraints contains $F$. In Sect.~\ref{sec:CGG} we address the special case of constrained $\beta$-skeletons - constrained Gabriel graph ($1$-skeleton). Although the algorithm given in Sect.~\ref{sec:beta-skeleton} can be successfully applied to constrained Gabriel graphs, the algorithm presented in Sect.~\ref{sec:CGG} is significantly simpler and requires only local information about an edge in question. Section~\ref{sec:beta-skeleton} presents an algorithm that identifies the minimum set $S$ of edges of a given plane graph $I = (V,E)$ such that $I \subseteq CG_\beta(V, S)$ for $1 \leq \beta \leq 2$, where $CG_\beta(V, S)$ is a constrained $\beta$-skeleton on the set $V$ of vertices and set $S$ of constraints. The hierarchy of the constrained neighbourhood graphs together with the hierarchy of the minimum sets of constraints are given in Sect.~\ref{sec:Hierarchy}.

\section{Basic Definitions}
\label{sec:basic_definitions}

Let $V$ be a set of $n$ points in the plane and $I = (V, E)$ be a plane graph representing the constraints. Each pair of points $u$, $v \in V$ is associated with a neighbourhood defined by some property $P(u,v,I)$ depending on the proximity graph under consideration. An edge-constrained neighbourhood graph $G_P(I)$ defined by the property $P$ is a graph with a set $V$ of vertices and a set $E_P$ of edges such that $\overline{uv} \in E_P$ if and only if $\overline{uv} \in E$ or $\overline{uv}$ satisfies $P(u,v,I)$. 

For clarity and to distinguish between different types of input graphs, if $I$ is a forest, we will denote $I$ by $F = (V, E)$, to emphasize its properties. 

In this paper, we assume that the points in $V$ are in general position (no three points are collinear and no four points are co-circular). 

Two vertices $u$ and $v$ are \emph{visible to each other with respect to $E$} provided that $\overline{uv} \in E$ or the line segment $\overline{uv}$ does not intersect the interior of any edge of $E$. For the following definitions, let $I = (V, E)$ be a plane graph.

\begin{definition}[Visibility Graph of I]
\label{def:VG(I)}
The \emph{visibility graph of $I$} is the graph $VG(I) = (V, E')$ such that $E' = \{ (u, v): u,v \in V$, $u$ and $v$ are \emph{visible} to each other with respect to $E\}$. It is a simple and unweighted graph.
\end{definition}

In Def.~\ref{def:VG(I)}, we may think of $I$ as of a set of obstacles. The nodes of $VG(I)$ are the vertices of $I$, and there is an edge between vertices $u$ and $v$ if they can \emph{see} each other, that is, if the line segment $\overline{uv}$ does not intersect the interior of any obstacle in $I$. 
We say that the endpoints of an obstacle edge see each other. Hence, the obstacle edges form a subset of the edges of $VG(I)$, and thus~$I \subseteq VG(I)$.

\begin{definition}[Euclidean Visibility Graph of I]
\label{def:EVG(I)}
The \emph{Euclidean visibility graph} $EVG(I)$ is the visibility graph of $I$, where each edge $\overline{uv}$ $(u,v \in V)$ is assigned weight $w(u,v)$ that is equal to the Euclidean distance between $u$ and $v$.
\end{definition}

\begin{definition}[Constrained Visibility Graph of I]
\label{def:CVG(I)}
The \emph{constrained visibility graph} $CVG(I)$ is the visibility graph of $I$, where each edge of $I$ is assigned weight $0$ and every other edge has weight equal to its Euclidean length.
\end{definition}

We use the notation $MST(G)$ to refer to a minimum spanning tree of the graph $G$. We assume that each edge of $G$ has weight equal to its Euclidean length, unless the edge was specifically assigned the weight $0$ by our algorithm. If none of the edges of $G$ are assigned the weight $0$ then $MST(G)$ is a Euclidean minimum spanning tree of $G$.

\begin{definition}[Constrained Minimum Spanning Tree of F]
\label{def:CMST(F)}
Given a plane forest $F = (V, E)$, the \emph{constrained minimum spanning tree} $CMST(F)$ is the minimum spanning tree of $CVG(F)$. 
\end{definition}

We assume that all the distances between any two vertices of $V$ are distinct, otherwise, any ties can be broken using lexicographic ordering. This assumption implies that there is a unique $MST$ and a unique $CMST$.

Since each edge of a plane forest $F$ has weight zero in $CVG(F)$, by running Kruskal on $CVG(F)$, we get $F \subseteq CMST(F)$. Notice also that if $F$ is a plane tree, then $F = CMST(F)$.

\begin{definition}[Locally Delaunay criterion]
\label{def:locallyDelaunay}
Let $G$ be a triangulation, $\overline{v_1 v_2}$ be an edge in $G$ (but not an edge of the convex hull of $G$), and $\triangle(v_1, v_2, v_3)$ and $\triangle(v_1, v_2, v_4)$ be the triangles adjacent to $\overline{v_1 v_2}$ in $G$. We say that $\overline{v_1 v_2}$ is a \emph{locally Delaunay edge} if the circle through $\{v_1, v_2, v_3 \}$ does not contain $v_4$ or equivalently if the circle through $\{v_1, v_2, v_4 \}$ does not contain $v_3$. Every edge of the convex hull of $G$ is also considered to be \emph{locally Delaunay}~\cite{DBLP:journals/ijcga/DevillersEGHRS03}.
\end{definition}

\begin{definition}[Constrained Delaunay Triangulation of I]
\label{def:CDT(F)}
The \emph{constrained Delaunay triangulation} $CDT(I)$ is the unique triangulation of $V$ such that each edge is either in $E$ or locally Delaunay. It follows that $I \subseteq CDT(I)$.
\end{definition}

This definition is equivalent to the classical definition used for example by Chew in~\cite{DBLP:journals/algorithmica/Chew89}:
$CDT(I)$ is the unique triangulation of $V$ such that each edge $e$ is either in $E$ or there exists a circle $C$ with the following properties:
\begin{enumerate}
\item The endpoints of edge $e$ are on the boundary of $C$.
\item Any vertex of $I$ in the interior of $C$ is not visible~to~at~least~one~endpoint~of~$e$.
\end{enumerate}

The equivalence between the two definitions was shown by Lee and Lin~\cite{DBLP:journals/dcg/LeeL86}.

When considering edge weights of $CDT(I)$, we assume that the weight of each edge is equal to the Euclidean distance between the endpoints of this edge.

The relative neighbourhood graph ($RNG$) was introduced by Toussaint in 1980 as a way of defining a structure from a set of points that would match human perceptions of the shape of the set~\cite{DBLP:journals/pr/Toussaint80}. An $RNG$ is an undirected graph defined on a set of points in the Euclidean plane by connecting two points $u$ and $v$ by an edge if there does not exist a third point $p$ that is closer to both $u$ and $v$ than they are to each other. Formally, we can define $RNG$ through the concept of a $lune$. Let $D(x,r)$ denote an open disk centered at $x$ with radius $r$, i.e., $D(x,r) = \{y: dist(x,y) < r\}$. Let $L_{u,v} = D(u,dist(u,v)) \cap D(v,dist(u,v))$; $L_{u,v}$ is called a $lune$.

\begin{definition}[Relative Neighbourhood graph of V]
\label{def:RNG(V)}
Given a set $V$ of points, the \emph{Relative Neighbourhood graph of $V$},  $RNG(V)$, is the graph with vertex set $V$ and the edges of $RNG(V)$ are defined as follows:
$\overline{uv}$ is an edge if and only if $L_{u,v} \cap V = \emptyset$.
\end{definition}

\begin{definition}[Constrained Relative Neighbourhood graph of I]
\label{def:CRNG(I)}
The \emph{constrained Relative Neighbourhood graph}, $CRNG(I)$, is defined as the graph with vertices $V$ and the set $E'$ of edges such that each edge $e = \overline{uv}$ is either in $E$ or, $u$ and $v$ are visible to each other and $L_{u,v}$ does not contain points in $V$ visible from both $u$ and $v$. It follows that $I \subseteq CRNG(I)$.
\end{definition}

Gabriel graphs were introduced by Gabriel and Sokal in the context of geographic variation analysis~\cite{GabrielSokal1969}. The Gabriel graph of a set $V$ of points in the Euclidean plane expresses the notion of proximity of those points. It is the graph with vertex set $V$ in which any points $u$ and $v$ of $V$ are connected by an edge if $u \neq v$ and the closed disk with $\overline{u v}$ as a diameter contains no other point of $V$. 

\begin{definition}[Locally Gabriel criterion]
\label{def:locallyGabriel}
The edge $\overline{u v}$ of the plane graph $G = (V, E)$ is said to be \emph{locally Gabriel} if the vertices $u$ and $v$ are visible to each other and the circle with $\overline{u v}$ as a diameter does not contain any points in $V$ which are visible from both $u$ and $v$. Refer to Fig.~\ref{fig:MSTinCDT}.
\end{definition}

\begin{wrapfigure}{r}{0.32\textwidth}
\vspace{-20pt}
\centering
\includegraphics[width=0.20\textwidth]{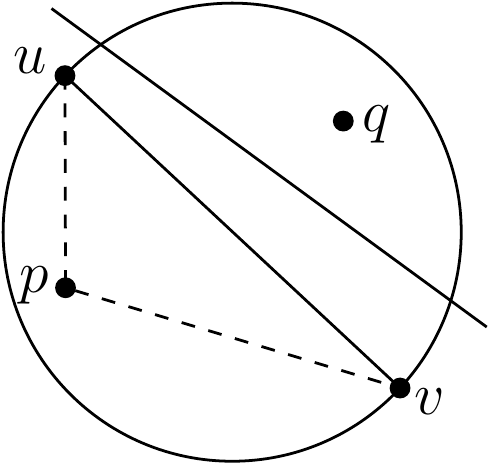}
\caption{Removal of $p$ makes $\overline{uv}$ locally Gabriel.}
\label{fig:MSTinCDT}
\end{wrapfigure}

\begin{definition}[Constrained Gabriel graph of I]
\label{def:CGG(I)}
The \emph{constrained Gabriel graph} $CGG(I)$ is defined as the graph with vertices $V$ and the set $E'$ of edges such that each edge is either in $E$ or locally Gabriel. It follows that $I \subseteq CGG(I)$.
\end{definition}

Relative neighbourhood and Gabriel graphs are special cases of a parametrized family of neighbourhood graphs called $\beta$-skeletons (defined by Kirkpatrick and Radke in~\cite{Kirkpatrick1985217}). The neighbourhood $U_{u,v}(\beta)$ is defined for any fixed $\beta$ ($1 \leq \beta < \infty$) as the intersection of two disks (refer to Fig.~\ref{fig:beta_eye}):
$U_{u,v}(\beta) = D \left( \Big( 1- \frac{\beta}{2}\Big)u + \frac{\beta}{2} v, \frac{\beta}{2} dist(u,v) \right) \cap $\\
\-\hspace{43pt} $D \left( \Big( 1- \frac{\beta}{2}\Big)v + \frac{\beta}{2} u, \frac{\beta}{2} dist(u,v) \right)$

\begin{definition}[(lune-based) $\beta$-skeleton of V]
\label{def:B-Skeleton(V)}
Given a set $V$ of points in the plane, the \emph{(lune-based) $\beta$-skeleton of V}, denoted $G_\beta(V)$ is the graph with vertex set $V$ and the edges of $G_\beta(V)$ are defined as follows: $\overline{uv}$ is an edge if and only if $U_{u,v}(\beta) \cap V = \emptyset$.
\end{definition}

Notice that $RNG(V)$ is a $\beta$-skeleton of $V$ for $\beta = 2$; namely $RNG(V) = G_2(V)$. Similarly, $GG(V) = G_1(V)$.

\begin{definition}[Constrained $\beta$-skeleton of I]
\label{def:Constrained_B-Skeleton(V)}
The \emph{constrained $\beta$-skeleton of I}, $CG_\beta(I)$ is the graph with vertex set $V$ and edge set $E'$ defined as follows: $e= \overline{uv} \in E'$ if and only if $e \in E$ or $u$ and $v$ are visible to each other and $U_{u,v}(\beta)$ does not contain points in $V$ which are visible from both $u$ and $v$. 
\end{definition}

\section{CMST Algorithm}
\label{sec:CMST algorithm}

\textbf{Problem 1:} Let a plane forest $F = (V, E)$ with $|V| = n$ points be given. Find the minimum set $S \subseteq E$ of edges such that $F \subseteq CMST(V,S)$. 

In other words, we want to find the smallest subset $S$ of edges of $F$ such that $CMST(F)$ is \textit{equal} to $CMST(V,S)$, although the weights of the two trees may be different. Recall, that $CMST(F) = CMST(V,E)$ is the minimum spanning tree of the weighted graph $CVG(V,E)$ where each edge of $E$ is assigned weight $0$, and every other edge is assigned a weight equal to its Euclidean length. Notice, that if $F$ is a tree then $F = CMST(F)$. If $F$ is disconnected then for every edge $e$ of $CMST(F)$ such that $e \notin E$ there exists a cut in $CVG(V,E)$ such that $e$ belongs to the cut and its weight is the smallest among other edges that cross the cut (this is true for the same cut in $EVG(V,E)$).
 
Let us begin by considering an example. We are given a tree $F = (\{v_1, v_2, v_3\},\\ \{ \overline{v_1 v_2}, \overline{v_2 v_3}\})$ (refer to Fig.~\ref{fig:example1}). Figure~\ref{fig:example11} shows $CDT(F)$. Observe that 
$CDT(F) = DT(\{v_1, v_2, v_3\})$. In other words, $CDT(F) = CDT(\{v_1, v_2, v_3\}, \emptyset)$ and thus no constraints are required to construct $CDT(F)$. However, this is not the case with $CMST(F)$. 
Obviously $MST(CDT(F)) \neq CMST(F)$ (refer to Fig.~\ref{fig:example111}), because $F \nsubseteq MST(CDT(F))$. We need to identify the minimum set $S \subseteq F$ of edges such that $F = CMST(V,S)$. In this example $S = \{ \overline{v_1 v_2}, \overline{v_2 v_3}\}$.

\begin{figure}[h]
\centering
\subfigure[]{%
		\label{fig:example1}%
		\includegraphics[width=0.26\textwidth]{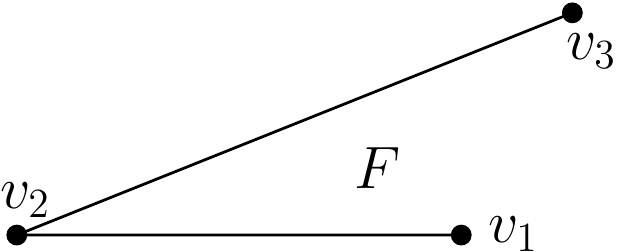}}
\hspace{0.04\textwidth}
\subfigure[]{%
		\label{fig:example11}%
		\includegraphics[width=0.26\textwidth]{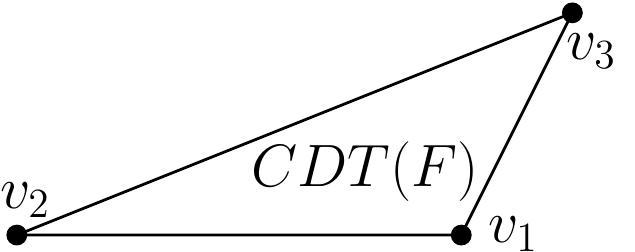}}
\hspace{0.04\textwidth}
\subfigure[]{%
		\label{fig:example111}%
		\includegraphics[width=0.26\textwidth]{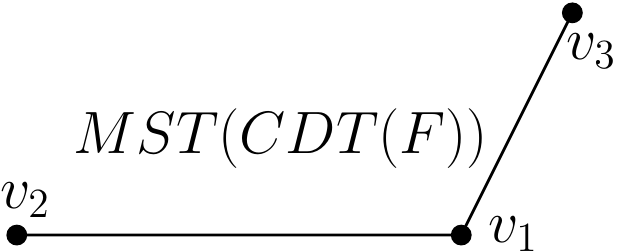}}
\caption{Example showing relationship between input graph, its $CDT$ and $MST(CDT)$. \textbf{(a)} Input graph $F$. \textbf{(b)} No constraints are required to construct $CDT(F)$. \textbf{(c)} $F \neq MST(CDT(F))$. We need two constraints $S = \{ \overline{v_1 v_2}, \overline{v_2 v_3}\}$.}
\label{fig:example}
\end{figure}

A first idea is to construct an $MST$ of $EVG(V,\emptyset)$. Every edge of $F$ that is not part of $MST(EVG(V,\emptyset))$ should be forced to appear in $CMST(F)$. If we do this by adding each such edge of $F$ to $S$ (recall that every edge in $S$ has weight $0$) then, unfortunately, some edges of $F$, that were part of $MST(EVG(V,\emptyset))$, will no longer be part of the $MST$ of the updated graph. A second approach is to start with $MST(EVG(V,\emptyset))$ and \textit{eliminate} every edge that is not part of $F$ and does not connect two disconnected components of $F$. Each such edge  $e \in MST(EVG(V,\emptyset))$ creates a cycle $c_e$ in $CMST(F) \cup \{ e \}$ and we have that $c_e \subseteq EVG(V,\emptyset)$. If $e$ becomes the heaviest edge of $c_e$ then it will no longer be part of the $MST$. Thus, we add to $S$ every edge of $c_e \cap E$ that is heavier than $e$. Although this approach gives us a set $S$ such that $F \subseteq CMST(V,S)$, the set $S$ of edges with weight $0$ may not be minimal. Consider the example of Fig.~\ref{fig:optimality_counterExample}. We are given a tree $F = (\{v_1, v_2, v_3, v_4\}, \{ \overline{v_1 v_2}, \overline{v_2 v_3}, \overline{v_3 v_4}\})$ (refer to Fig.~\ref{fig:optimality_counterExample_F}). Every edge on the path from $v_1$ to $v_4$ in $F$ is heavier than $\overline{v_1 v_4}$ - an edge of $MST(EVG(V,\emptyset)) \setminus F$. In order to eliminate $\overline{v_1 v_4}$ from the $MST$ we assign the weight $0$ to all the edges of the path $c_{\overline{v_1 v_4}} \setminus \overline{v_1 v_4}$, i.e. $S = \{ \overline{v_1 v_2}, \overline{v_2 v_3}, \overline{v_3 v_4}\}$. However, it is sufficient to assign weight $0$ only to the edge $\overline{v_2 v_3}$. In this case, $CMST(V,\{\overline{v_2 v_3}\}) = F$. 

\begin{figure}[h]
\centering
\subfigure[]{%
		\label{fig:optimality_counterExample_F}%
		\includegraphics[width=0.30\textwidth]{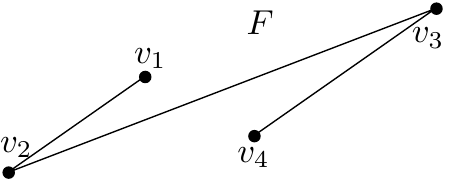}}
\subfigure[]{%
		\label{fig:optimality_counterExample_MST}%
		\includegraphics[width=0.30\textwidth]{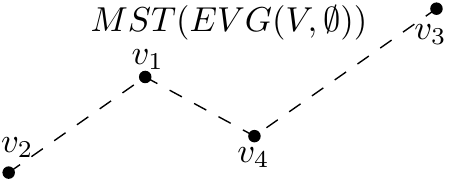}}
\subfigure[]{%
		\label{fig:optimality_counterExample_cycle}%
		\includegraphics[width=0.30\textwidth]{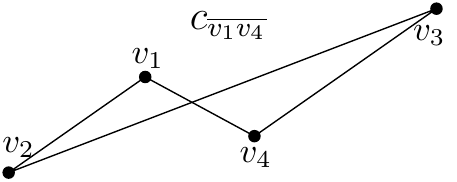}}	
\caption{Counterexample to the optimality of $S = \{ \overline{v_1 v_2}, \overline{v_2 v_3}, \overline{v_3 v_4}\}$. The set $S' = \{ \overline{v_2 v_3}\}$ is optimal. \textbf{(a)} Input graph $F = (V,E)$. \textbf{(b)} $MST$ of $V$. \textbf{(c)} Cycle $c_{\overline{v_1 v_4}}$ of the graph $F \cup \{ \overline{v_1 v_4} \}$. Every edge on the path from $v_1$ to $v_4$ in $F$ is heavier than $\overline{v_1 v_4}$.}
\label{fig:optimality_counterExample}
\end{figure}

As we will see later this second approach is correct when applied to the $MST$ of a different graph. Instead of considering edges of $MST(EVG(V,\emptyset))$ we apply our idea to $MST(EVG(F))$. Notice that $EVG(F)$ does not have edges that intersect edges of $F$, and thus we will not encounter cases similar to the example of Fig.~\ref{fig:optimality_counterExample}. Now it may look like we will be missing important information by considering only a subset of $VG(V, \emptyset)$. Can we guarantee that $CMST(V,S)$ will not contain edges that intersect edges of $F \setminus S$? To answer this question, we prove the following statement: $CMST(V, S) \subseteq VG(F)$ (Lemma~\ref{lem:CMSTinVG(T)}). The basic algorithm for constructing $S$ is given below.  We prove its optimality by showing minimality of $S$ (Lemma~\ref{lem:minimal&minimum}). Later, we present an efficient implementation of this algorithm.
 
By $CDT^\circ (F)$ we denote $CDT(F)$ where each edge of $F$ is assigned weight~$0$.

\begin{algorithm}[H]
\caption{Minimum set of constraints for $CMST$}\label{alg_CMST}
\KwIn{plane forest $F = (V, E)$}
\KwOut{minimum set $S \subseteq E$ of constraints such that $F \subseteq CMST(V,S)$}
\BlankLine
Construct $T' = MST(EVG(F))$
\tcp*[r]{we show $T' = MST(CDT(F))$}
Construct $CMST(F)$
\tcp*[r]{$CMST(F) = MST(CDT^\circ (F))$}
Initialize $S = \emptyset$\;
\ForEach{$e' \in T'$}{
	\If{$CMST(F) \cup \{e'\}$ creates a cycle $c_{e'}$}{
		\ForEach{$e \in c_{e'} \cap E$}{
			\If{$w(e) > w(e')$}{set $S \leftarrow S \cup \{ e \}$}
		}
	}
}
\end{algorithm}

We show the correctness of Algorithm~\ref{alg_CMST} by proving Lemmas~\ref{lem:CMSTinVG(T)} -- \ref{lem:MST(EVG(F))inCDT(F)}. We start by observing an interesting property of the edges of $F$ that were \textbf{not} added to $S$ during the execution of the algorithm. 

\begin{property}
\label{property1}
Let $S$ be the output of Algorithm~\ref{alg_CMST} on the input plane forest $F = (V, E)$. Let $T' = MST(EVG(F))$. If $e = \overline{uv} \in F$ and $e \notin S$ then $e \in T'$. 
\end{property}

\begin{proof}
Assume to the contrary that $e \notin T'$. If we add $e$ to $T'$, this creates a cycle $c_e$. Notice that $e$ is the longest edge among the edges of $c_e$. It is given that $e \in F$ and thus $e \in CMST(F)$ by Def.~\ref{def:CMST(F)}. Let us look at the cut $U$, $V \setminus U$ in $CMST(F)$ such that $u \in U$, $v \in V \setminus U$ (refer to Fig.~\ref{fig:cut}). In $T'$ there is a path from $u$ to $v$ that does not contain $e$. This path is $c_e \setminus \{e\}$. There exists an edge $e' \in T'$ such that $e' \in c_e$ and $e'$ belongs to the cut $U$, $V \setminus U$. Notice that $e' \notin CMST(F)$, otherwise $CMST(F)$ has a cycle. Since $e$ is the longest edge of the cycle $c_e$ then $|e| > |e'|$. Because $e' \in T'$ (and $e' \notin CMST(F)$), Algorithm~\ref{alg_CMST} will consider $e'$ in steps $4$ -- $8$. Since $e'$ belongs to the cut $U$, $V \setminus U$ in $CMST(F)$, the cycle $c_{e'}$ in $CMST(F) \cup \{ e' \}$ is formed. Notice that $e \in c_{e'}$ because $e$ and $e'$ belong to the cut $U$, $V \setminus U$. Since $|e| > |e'|$, the edge $e$ is added to $S$. This is a contradiction to $e \notin S$ and thus $e \in T'$.
\qed
\end{proof}


\begin{figure}[h]
  \begin{minipage}[c]{0.5\textwidth}
    \includegraphics[width=0.84\textwidth]{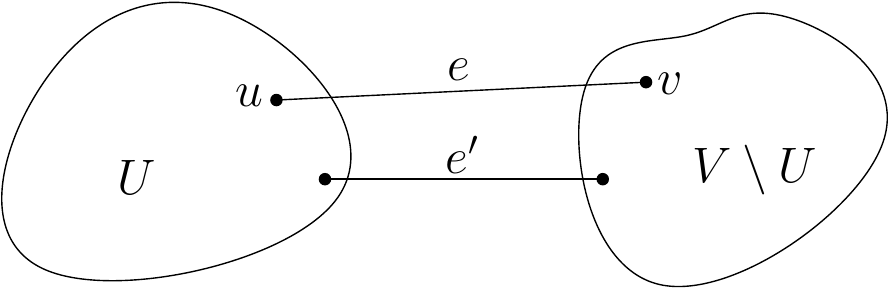}
  \end{minipage}\hfill
  \begin{minipage}[c]{0.5\textwidth}
    \caption{The cut $U$, $V \setminus U$ in $CMST(F)$ s.t. $u \in U$, $v \in V \setminus U$.} 
    \label{fig:cut}
  \end{minipage}
\end{figure}

\begin{lemma}
\label{lem:CMSTinVG(T)}
Let $S$ be the output of Algorithm~\ref{alg_CMST} on the input plane forest $F = (V, E)$. We have $CMST(V, S) \subseteq VG(F)$. 
\end{lemma}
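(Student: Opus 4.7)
The plan is to proceed by contradiction. Suppose some edge $e = \overline{ab} \in CMST(V, S)$ lies outside $VG(F)$, so $e$ properly crosses an edge $f = \overline{uv} \in F$.

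I would first dispatch the case $f \in S$: then $f$ is an obstacle in $CVG(V, S)$, so $e$ is not visible with respect to $S$ and therefore cannot belong to $CVG(V, S) \supseteq CMST(V, S)$, a contradiction. Hence $f \in E \setminus S$, and Property~\ref{property1} gives $f \in T' = MST(EVG(F))$. In particular, $f \in CVG(V, S)$ with Euclidean weight $|f|$, because $S \subseteq F$ and no edge of $F$ properly crosses another edge of $F$.

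The core of the proof is a swap argument producing a spanning tree of $CVG(V, S)$ strictly lighter than $CMST(V, S)$. Let $(U, V \setminus U)$ be the cut obtained by deleting $e$ from $CMST(V, S)$, with $a \in U$ and $b \in V \setminus U$. By the MST cut property in $CVG(V, S)$, $e$ is the minimum-weight edge of $CVG(V, S)$ crossing this cut. I would show that $f$ also crosses this cut, which would already force $|e| \leq |f|$; then, using the cut of $T'$ obtained by deleting $f$ together with the MST property of $T'$ in $EVG(F)$ and the triangle inequality applied at the crossing point of $e$ and $f$ (which yields that one of the ``diagonal'' pairs $\{\overline{au}, \overline{bv}\}$ or $\{\overline{av}, \overline{bu}\}$ has Euclidean total strictly less than $|e| + |f|$), I would exhibit a shorter alternative cross-edge lying in $CVG(V, S)$ whose insertion in place of $e$ produces a lighter spanning tree, contradicting the minimality of $CMST(V, S)$.

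The principal obstacle is the topological/planar separation step: proving that $u$ and $v$ lie in opposite components of $CMST(V, S) \setminus \{e\}$ (equivalently, that the $CMST(V, S)$-path from $u$ to $v$ passes through $e$), and the analogous statement for $a$ and $b$ in $T' \setminus \{f\}$. Since $CMST(V, S)$ is not known a priori to be plane, one cannot invoke planar separation directly. My plan is to exploit the planarity of $T'$ (as a subgraph of $CDT(F)$) together with the non-crossing of $T'$-edges and $F$-edges, which, combined with Property~\ref{property1}, allow the required separation to be transferred from $T'$ to $CMST(V, S)$ in a neighborhood of the crossing of $e$ and $f$; a secondary delicate point is verifying that the swap edge actually lies in $CVG(V, S)$ (does not cross any edge of $S$), which follows because $S \subseteq F$ and the crossing occurs in the interior of the segments $e$ and $f$ rather than near any edge of $S$.
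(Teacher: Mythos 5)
Your strategy is genuinely different from the paper's: you attempt an exchange argument modeled on the classical proof that the Euclidean MST is plane, whereas the paper shows that a crossing would force an endpoint of the crossed edge of $F$ into the diametral disk of $e^*$, visible to both of its endpoints, contradicting $CMST(V,S)\subseteq CGG(V,S)$ (Lemma~\ref{lem:Hierarchy}). The opening of your proof is fine (the case $f\in S$, and the use of Property~\ref{property1} to get $f\in T'$), but the exchange argument has a genuine gap at precisely the step you flag as the ``principal obstacle,'' and the fix you sketch does not work. There is no argument that $u$ and $v$ are separated by the cut obtained by deleting $e$ from $CMST(V,S)$: the geometric crossing of the segments $\overline{ab}$ and $\overline{uv}$ says nothing about which component of $CMST(V,S)\setminus\{e\}$ contains $u$ or $v$, because separation is a global combinatorial property of the tree, not a local geometric one, and $CMST(V,S)$ is not known to be plane at this stage (that is essentially what the lemma is establishing). ``Transferring the separation from $T'$ in a neighborhood of the crossing'' therefore has no content; if $u$ and $v$ land on the same side of the cut, no diagonal incident to both $a$ (or $b$) and an endpoint of $f$ need cross it, and the exchange collapses.

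Two further steps are also unsupported. First, a candidate swap edge such as $\overline{au}$ must actually lie in $CVG(V,S)$ (respectively in $EVG(F)$, for the cut of $T'$), i.e., it must not cross any edge of $S$ (respectively of $F$). Your justification --- that the crossing of $e$ and $f$ occurs in the interiors of those two segments, away from edges of $S$ --- is not an argument: $\overline{au}$ is a different segment and can be blocked by an edge of $F$ far from the crossing point. Controlling exactly this blocking is what drives the paper's proof into its finite-descent visibility arguments and the induction over all $k$ edges of $F$ crossing $e^*$. Second, the triangle inequality at the crossing point gives $|\overline{au}|+|\overline{bv}|<|e|+|f|$, but since $e$ and $f$ belong to two different minimum spanning trees of two different graphs, with no a priori relation between $|e|$ and $|f|$, this does not yield a single swap edge that both crosses the relevant cut and is strictly shorter than the edge being deleted; the case analysis that rescues this in the one-tree setting is unavailable here. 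To repair the proof you would essentially have to reprove the disk-emptiness facts of the paper's Steps 1--2, at which point the paper's route through $CGG(V,S)$ is the natural conclusion.
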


\begin{proof}
Let $e^* = \overline{ab}$ be an arbitrary edge of $CMST(V,S)$. Assume to the contrary that $e^* \notin VG(F)$ (and hence $e^* \notin F$ and $e^* \notin S$). Thus there exists an edge of $F$ that intersects $e^*$. Notice, that this edge cannot be in $S$. 

Let $k$ ($1 \leq k \leq n$) be the number of edges of $F$ that intersect $e^*$. Let $e_i = \overline{c_i d_i} \in F$ be the edge that intersects $e^*$ at point $x_i$, where $i$ ($0 \leq i < k$) represents an ordering of edges $e_i$ according to the length $|\overline{ax_i}|$. In other words, the intersection point between $e^*$ and $e_0$ is the closest to $a$ among other edges of $F$ that intersect $e^*$. Refer to Fig.~\ref{fig:star_bars}.

\begin{figure}[h]
  \begin{minipage}[c]{0.5\textwidth}
    \includegraphics[width=0.84\textwidth]{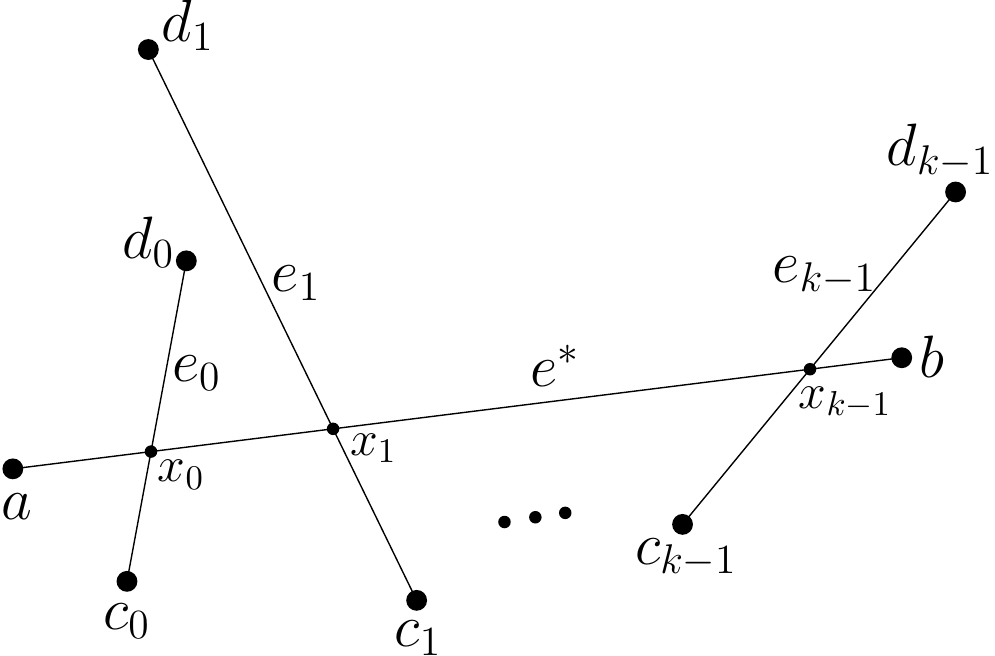}
  \end{minipage}\hfill
  \begin{minipage}[c]{0.5\textwidth}
    \caption{Intersection between $e^* \in CMST(V,S)$ and $k$ edges of $F$. Notice that the points $x_0, \ldots, x_{k-1}$ do not belong to $V$.} \label{fig:star_bars}
  \end{minipage}
\end{figure}

  
We prove this lemma in three steps. First we derive some properties of $e_i$. Then we show that both endpoints of $e^*$ are outside the disk with $e_i$ as a diameter for every $0 \leq i < k$. We finalize the proof by showing that $e^* \notin CGG(V,S)$ and thus by Lemma~\ref{lem:Hierarchy} (establishing that $CMST(V,S) \subseteq CGG(V,S)$, refer to Sect.~\ref{sec:Hierarchy}) we have $e^* \notin CMST(V,S)$. This contradicts the definition of $e^*$ which leads to the conclusion that $e^* \in VG(F)$ (meaning that the intersection between $e^*$ and an edge of $F$ is not possible). 

\textbf{Step 1:} Since $e_i \in F$ and $e_i \notin S$ then by Property~\ref{property1}: $e_i \in MST(EVG(F))$. Let $D_{e_i}$ be the disk with $e_i$ as a diameter. There does not exist a point of $V$ inside $D_{e_i}$ that is visible to both $c_i$ and $d_i$. Otherwise, if such a point $v$ exists, then the cycle $\{ \overline{vc_i}, e_i, \overline{d_i v} \}$ is part of $VG(F)$ with $e_i$ being the longest edge of this cycle. This contradicts $e_i \in MST(EVG(F))$.

\textbf{Step 2:} Let us first consider $e_0$ and show that $a \notin D_{e_0}$. Refer to Fig.~\ref{fig:star_zerobar}. 

\begin{figure}[h]
\centering
\subfigure[In bold we highlight line segments that cannot be intersected by edges of $F$.]{
		\label{fig:star_zerobar}
		\includegraphics[width=0.38\textwidth]{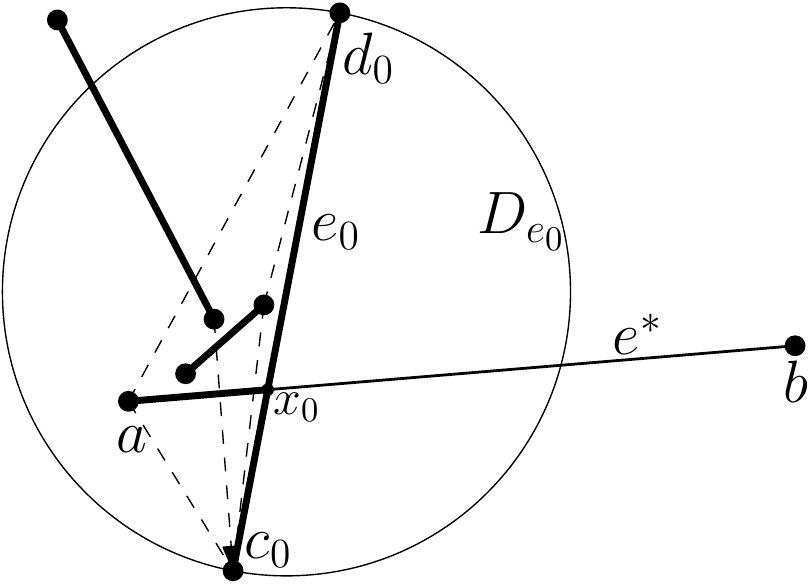}}
\hspace{0.05\textwidth}
\subfigure[Notice that $\overline{a x_0}$, $\overline{x_0 x_1}$, $e_0$ and $e_1$ cannot be intersected by edges of $F$.]{
		\label{fig:star_secondbar}
		\includegraphics[width=0.34\textwidth]{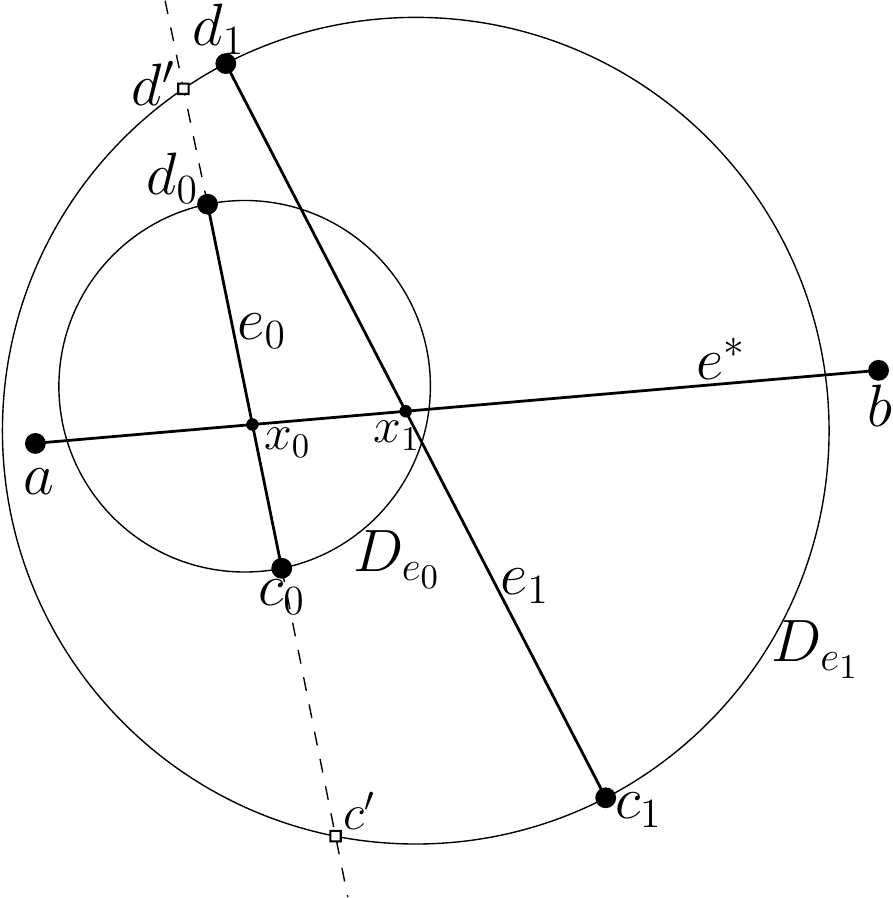}}
\caption{Both endpoints of $e^*$ are outside the disk with $e_i$ as a diameter for every $0 \leq i < k$.}
\label{fig:star_zerosecondbar}
\end{figure}

Assume to the contrary that $a \in D_{e_0}$. We showed in the previous step that there does not exist a point of $V$ inside $D_{e_0}$ that is visible to both $c_0$ and $d_0$. Thus, the point $a$ must be blocked from $c_0$ or $d_0$ by an edge of $F$. Notice, that edges of $F$ cannot intersect neither line segment $\overline{a x_0}$ (refer to the definition of $e_0$) nor $e_0$ (because $F$ is plane and $e_0 \in F$). Therefore, the edge that blocks $a$ from $c_0$ (respectively, $d_0$) must have one endpoint inside the triangle $\triangle (a, c_0, x_0)$ (respectively, $\triangle (a, x_0, d_0 )$). Since both triangles are inside $D_{e_0}$, this endpoint also belongs to $D_{e_0}$ and must not be visible to both $c_0$ and $d_0$. It must be blocked similarly to $a$. Since we have a finite number of points in $V$ there will be a point in $D_{e_0}$ that will \textbf{not} be blocked from $c_0$ and $d_0$. This is a contradiction to $a \in D_{e_0}$.

Consider the edge $e_1$. We want to show that $a \notin D_{e_1}$. We proved that $a \notin D_{e_0}$. Assume to the contrary that $a \in D_{e_1}$ and thus $\angle(c_1, a, d_1) > \pi/2$. Refer to Fig.~\ref{fig:star_secondbar}. It follows, that $x_0 \in D_{e_1}$. We claim that at least one endpoint of $e_0$ belongs to $D_{e_1}$. Assume to the contrary that $c_0$, $d_0 \notin D_{e_1}$. Then the edge $e_0$ intersects the boundary of $D_{e_1}$ in points $c'$ and $d'$. Since $e_0$ and $e_1$ do not intersect, by construction the chord $\overline{c'd'}$ lies between $a$ and $e_1$ and therefore $\angle (c', a, d') > \angle (c_1, a, d_1) > \pi/2$. On the other hand, since $a \notin D_{e_0}$, we have $\angle (c', a, d') < \angle (c_0, a, d_0) < \pi/2$. We have a contradiction and thus at least one endpoint of $e_0$ belongs to $D_{e_1}$. Without loss of generality assume that $c_0 \in D_{e_1}$. We proved in step $1$ that there are no points of $V$ inside $D_{e_1}$ that are visible to both $c_1$ and $d_1$. Therefore, the point $c_0$ must be blocked from $c_1$ or $d_1$ by an edge of $F$. Notice that $\overline{x_0 x_1}$, $e_0$ and $e_1$ cannot be intersected by edges of $F$. It follows that one endpoint of the blocking edge must be inside quadrilateral $\{c_0, c_1, x_1, x_0\}$ or triangle $\triangle(x', x_1, d_1)$, where $x'$ is the intersection point between $\overline{c_0 d_1}$ and $e^*$. Both polygons belong to $D_{e_1}$ and thus the endpoint of the blocking edge also belongs to $D_{e_1}$. This endpoint cannot be visible to both $c_1$ and $d_1$ and thus must be blocked by another edge of $F$. Since the number of points in $V$ is finite there will be a point in $D_{e_1}$ that will be visible to both $c_1$ and $d_1$. This is a contradiction and thus $a \notin D_{e_1}$.

In a similar way we can show that $a \notin D_{e_i}$ for every $e_i$, $0 \leq i < k$. Symmetrically, it is also true that $b \notin D_{e_i}$ for $0 \leq i < k$. Thus we have shown that for every edge $e \in F$ that intersects $e^*$, both endpoints of $e^*$ are outside $D_e$.

\textbf{Step 3:} In the previous step we showed that $a,b \notin D_{e_i}$ for $0 \leq i < k$. Thus $e^*$ intersects the boundary of $D_{e_i}$ at the points $a'$ and $b'$. Refer to Fig.~\ref{fig:star_finalbar}. 

\renewcommand\windowpagestuff{%
\centering
\captionsetup{width=.8\linewidth}
\includegraphics[width=0.62\textwidth]{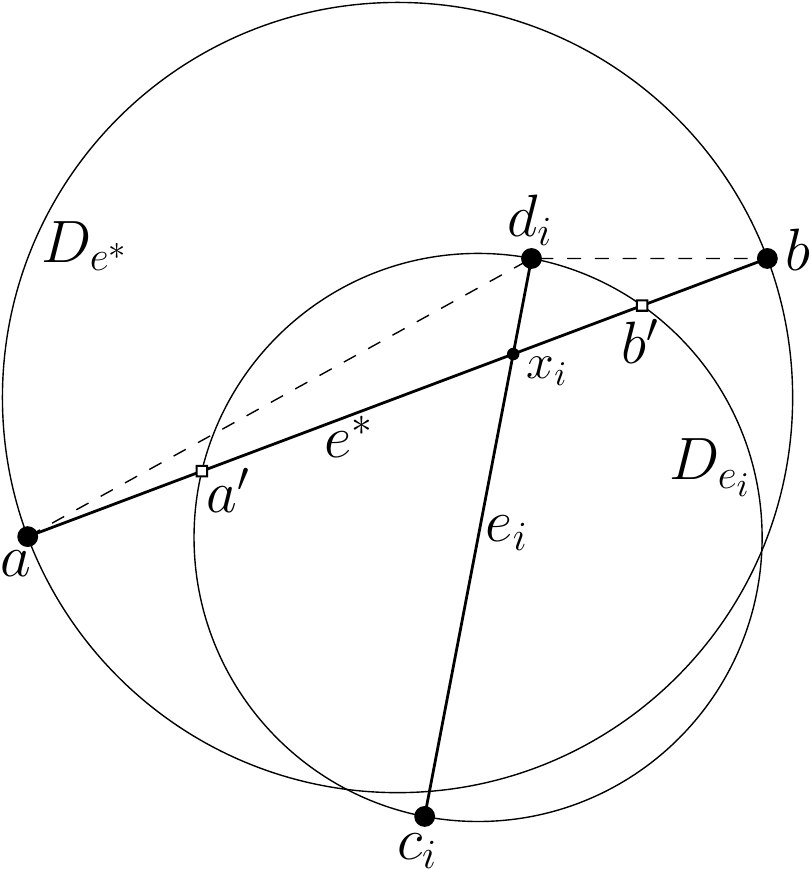}
\captionof{figure}{Intersection between $e^* \in CMST(V,S)$ and an edge of $F$.}
\label{fig:star_finalbar}
}
\opencutright
\begin{cutout}{2}{0.5\linewidth}{0pt}{14}

Consider the chord $\overline{a'b'}$ of $D_{e_i}$. Since $e_i$ and $\overline{a'b'}$ intersect, then $\angle( a', d_i, b')\geq \pi/2$ or $\angle( b',c_i,a')\geq \pi/2$. Without loss of generality assume that $\angle( a', d_i, b')\geq \pi/2$. It follows that $\angle( a, d_i, b) > \pi/2$ and thus $d_i \in D_{e^*}$. Consider the graph $VG(V,S)$. We are given that $e^* \in CMST(V,S)$. We prove in Lemma~\ref{lem:Hierarchy} (refer to Sect.~\ref{sec:Hierarchy}) that $CMST(V,S) \subseteq CGG(V,S)$. Since $e^* \notin S$ then by Def.~\ref{def:locallyGabriel} and~\ref{def:CGG(I)} the disk $D_{e^*}$ does not contain points in $V$ which are visible from both $a$ and $b$. Therefore the point $d_i$ must be blocked from $a$ or $b$ by an edge of $S$. Notice, that edges of $S$ cannot intersect neither $e^*$ (because $e^* \in VG(V,S)$) nor $e_i$ (because $S \subseteq E$ and $F$ is plane). Thus, one endpoint of the edge that blocks $d_i$ from $a$ or $b$ must be inside the triangle $\triangle (a, b, d_i)$. This endpoint belongs to $D_{e^*}$ (since $\triangle (a, b, d_i) \in D_{e^*}$) and therefore must be blocked from $a$ or $b$ by another edge of $S$. We have a finite number of points in $V$ and thus there is a point of $V$ in $D_{e^*}$ that is not blocked from $a$ and $b$. This is a contradiction to $e^* \in CGG(V,S)$.
\end{cutout}

We conclude that an intersection between an edge of $F$ and an edge of $CMST(V,S)$ is not possible. Therefore $e^* \in VG(F)$, meaning $CMST(V, S) \subseteq VG(F)$.
\qed
\end{proof}

\begin{lemma}
\label{lem:CMST=T}
Let $S$ be the output of Algorithm~\ref{alg_CMST} on the input plane forest $F = (V, E)$. We have $F \subseteq CMST(V, S)$. 
\end{lemma}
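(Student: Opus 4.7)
The plan is to partition $F$ as $S \cup (F\setminus S)$ and handle the two pieces separately. The easy half, $S \subseteq CMST(V,S)$, follows because every edge of $S$ has weight $0$ in $CVG(V,S)$ and $S$ is acyclic (as a subset of the plane forest $F$), so Kruskal's algorithm applied to $CVG(V,S)$ inserts every edge of $S$ before any positively weighted edge and never creates a cycle.

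For the harder half, $F\setminus S \subseteq CMST(V,S)$, I will argue by contradiction. Pick $e=\overline{uv}\in F\setminus S$ and suppose $e\notin CMST(V,S)$. Because $F$ is plane and $S\subseteq F$, the edge $e$ crosses no edge of $S$, so $e\in VG(V,S)$ with $CVG(V,S)$-weight exactly $|e|$. The MST cycle property then produces a path $P\subseteq CMST(V,S)$ from $u$ to $v$ in which every edge has $CVG(V,S)$-weight strictly less than $|e|$, and by Lemma~\ref{lem:CMSTinVG(T)} we have $P\subseteq VG(F)\subseteq EVG(F)$.

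The key choice is the cut. I will work with $(B,V\setminus B)$, obtained by deleting $e$ from $CMST(F)$ (note $e\in CMST(F)$ since $F$-edges have weight $0$ in $CVG(V,E)$, so $F\subseteq CMST(F)$), with $u\in B$ and $v\in V\setminus B$. Three observations close the argument. First, no edge of $S$ crosses $(B,V\setminus B)$: each $S$-edge lies in some $F$-component, and every $F$-component is connected in $CMST(F)\setminus\{e\}$, hence sits entirely on one side of the cut. Second, the path $P$ must cross the cut, and by the first observation the crossing edge $\tilde e$ is not in $S$; so its $CVG(V,S)$-weight equals its Euclidean length, giving $|\tilde e|<|e|$. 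Third, let $e^*$ be the minimum-weight edge of $EVG(F)$ crossing $(B,V\setminus B)$. By the cut property for $T'=MST(EVG(F))$ we have $e^*\in T'$, and $|e^*|\le|\tilde e|<|e|$. Moreover, the unique $CMST(F)$-path between the endpoints of $e^*$ must traverse $e$, since $e$ is the sole edge of $CMST(F)$ crossing $(B,V\setminus B)$. Thus $e\in c_{e^*}\cap E$ with $|e|>|e^*|$, so Algorithm~\ref{alg_CMST} would have inserted $e$ into $S$, contradicting $e\notin S$.

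The main obstacle is isolating the correct cut. The naive choice, obtained by deleting $e$ from $F$, fails because $CMST(F)$ may reroute via non-$F$ connector edges between $F$-components, so $S$-edges or edges of $P$ could straddle that naive cut and spoil the argument. Switching to the cut defined by deleting $e$ from $CMST(F)$ is what simultaneously forces every $F$-component onto a single side (so no $S$-edge crosses), still permits $P$ to supply a short non-$S$ witness across the cut, and makes the cycle $c_{e^*}$ generated by the cut-property certificate $e^*$ necessarily pass through $e$ — which is exactly the condition Algorithm~\ref{alg_CMST} checks when deciding whether to add an edge to $S$.
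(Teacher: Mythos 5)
Your proof is correct, and its skeleton matches the paper's: both dispose of $S$ via the zero-weight/Kruskal observation, and for $e\in F\setminus S$ both work with the fundamental cut of $e$ in $CMST(F)$ and invoke Lemma~\ref{lem:CMSTinVG(T)} to place the $u$--$v$ path of $CMST(V,S)$ inside $VG(F)$, extracting a crossing edge shorter than $e$. Where you genuinely diverge is the endgame. The paper keeps that crossing edge $e'$ and splits into two cases according to whether $e'\in MST(EVG(F))$; the second case needs an additional exchange argument inside $CMST(V,S)$. You instead use $\tilde e$ only to certify that some $EVG(F)$-edge of length less than $|e|$ crosses the cut, and then apply the cut property to the minimum such edge $e^*$, which is automatically in $T'$ and whose fundamental cycle in $CMST(F)$ automatically contains $e$ (since $e$ is the unique tree edge across that cut). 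This collapses the paper's case analysis into a single step and is arguably cleaner. One sentence needs repair: to show no $S$-edge crosses the cut you assert that ``every $F$-component is connected in $CMST(F)\setminus\{e\}$,'' which is false for the component containing $e$; the correct (and simpler) reason is that every $S$-edge is an edge of the tree $CMST(F)$ distinct from $e$, and the only edge of a tree crossing the fundamental cut of $e$ is $e$ itself. The conclusion you need is unaffected.
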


\begin{proof}
Let $e = \overline{uv}$ be an arbitrary edge of $F$. We have to show that $e \in CMST(V,S)$. 
There are two cases to consider:
\begin{enumerate}
\item $e \in S$: then $e \in CMST(V,S)$ by Def.~\ref{def:CMST(F)}.
\item $e \notin S$: then by Property~\ref{property1} we have $e \in MST(EVG(F))$. Notice, that $e \in CMST(F)$ because $e \in F$. Consider the cut $U$, $V \setminus U$ in $CMST(F)$ such that $u \in U$, $v \in V \setminus U$ (refer to Fig.~\ref{fig:cut2}). 

\begin{figure}[h]
  \begin{minipage}[c]{0.44\textwidth}
    \includegraphics[width=0.94\textwidth]{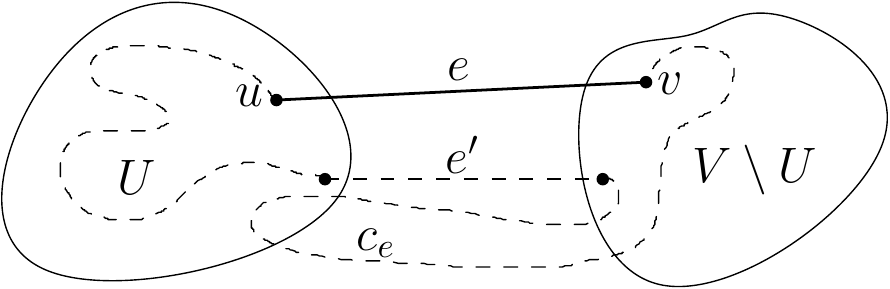}
  \end{minipage}\hfill
  \begin{minipage}[c]{0.56\textwidth}
    \caption{The cut $U$, $V \setminus U$ in $CMST(F)$ s.t. $u \in U$, $v \in V \setminus U$, $e \in F$. The path $c_e \setminus \{e\}$ from $u$ to $v$ in $CMST(V,S)$ is shown in dashed line. By Lemma~\ref{lem:CMSTinVG(T)} none of the edges of this path intersects $e$. Notice, that $e, e' \notin S$ and $e$ is the heaviest edge of the cycle $c_e$, thus $w(e) > w(e')$.}
\label{fig:cut2}
  \end{minipage}
\end{figure}


Notice, that none of the edges of $S$ belong to this cut, otherwise $CMST(F)$ has a cycle (because $S \in F \subseteq CMST(F)$). Assume to the contrary that $e \notin CMST(V,S)$. If we add $e$ to $CMST(V,S)$ we will create a cycle $c_e$, that contains the path $c_e \setminus \{e\}$ from $u$ to $v$ in $CMST(V,S)$. By Lemma~\ref{lem:CMSTinVG(T)} we have $c_e \subseteq VG(F)$. Therefore, there exist an edge $e' \in c_e$ that belongs to the cut $U$, $V \setminus U$ in $VG(F)$. Notice, that $e$ must be the heaviest edge of the cycle $c_e$, therefore, since $e' \notin S$ then $|e| > |e'|$ and thus $w(e) > w(e')$. Two cases are possible:
\begin{enumerate}
\item $e' \in MST(EVG(F))$: Algorithm~\ref{alg_CMST} considers $e'$ in step $4$. Since $e$ and $e'$ belong to the same cut $U$, $V \setminus U$ a cycle in $CMST(F) \cup \{e'\}$ is formed. This cycle contains $e$. Because $|e| > |e'|$ the edge $e$ is added to $S$. This contradicts to $e \notin S$.

\item $e' \notin MST(EVG(F))$: there exist a cycle $c_{e'}$ in the graph $MST(EVG(F)) \cup \{e'\}$ such that $e'$ is the heaviest edge of this cycle. Consider the cut $U'$, $V \setminus U'$ in $CMST(V,S)$ such that one endpoint of $e'$ is in $U'$ and another is in $V \setminus U'$ (refer to Fig.~\ref{fig:cut3}). There exist an edge $\widehat{e}$ of $c_{e'}$ that belongs to the cut $U'$, $V \setminus U'$ in $VG(F)$. Since $e' \notin S$ we can delete it from $CMST(V,S)$. Notice that $CMST(V,S) \setminus \{e'\} \cup \{\widehat{e}\} $ is a tree that contains all the edges of $S$. Moreover, since $|e'| > |\widehat{e}|$ the weight of $CMST(V,S) \setminus \{e'\} \cup \{\widehat{e}\}$ is smaller than the weight of $CMST(V,S)$, which is a contradiction.
\end{enumerate}
\end{enumerate}
We showed a contradiction to the fact that that $e \notin CMST(V,S)$ and thus $F \subseteq CMST(V, S)$. 
\qed 
\end{proof}

\begin{figure}[h]
  \begin{minipage}[c]{0.44\textwidth}
    \includegraphics[width=0.94\textwidth]{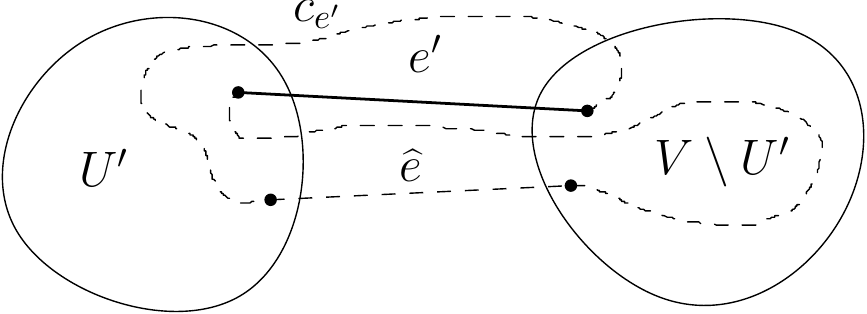}
  \end{minipage}\hfill
  \begin{minipage}[c]{0.56\textwidth}
\caption{The cut $U'$, $V \setminus U'$ in $CMST(V,S)$ s.t. $e' \in CMST(V,S)$ and $e'$ belongs to this cut. Recall that $e' \notin S$. The cycle $c_{e'}$ in $MST(EVG(F)) \cup \{e'\}$ is shown in dashed line. By Lemma~\ref{lem:CMSTinVG(T)} none of the edges of this cycle is intersected by $e'$. Notice, that $e'$ is the heaviest edge of $c_{e'}$.}
\label{fig:cut3}
  \end{minipage}
\end{figure}


\begin{lemma}
\label{lem:minimal&minimum}
Let $S$ be the output of Algorithm~\ref{alg_CMST} on the input plane forest $F = (V, E)$. The set $S$ is minimal and minimum. 
\end{lemma}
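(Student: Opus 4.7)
The plan is to prove the stronger claim that every valid constraint set $S^* \subseteq E$ (i.e., every $S^*$ with $F \subseteq CMST(V,S^*)$) satisfies $S \subseteq S^*$. Both conclusions follow at once: no proper subset of $S$ is valid, giving minimality; and no valid set can be smaller than $S$, giving minimum-ness. So the argument reduces to a single cut-and-cycle contradiction rather than two separate proofs.

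Fix $e \in S$ and, for contradiction, suppose $e \notin S^*$. By Algorithm~\ref{alg_CMST}, there is a witness $e' \in T' = MST(EVG(F))$ with $e' \notin CMST(F)$ (hence $e' \notin F \supseteq S^*$) such that the cycle $c_{e'}$ in $CMST(F) \cup \{e'\}$ contains $e$ and $w(e) > w(e')$. First I would verify that $c_{e'}$ is also a cycle in $CVG(V,S^*)$: the path $c_{e'}\setminus\{e'\} \subseteq CMST(F) \subseteq VG(F)$, $e' \in T' \subseteq VG(F)$, and $S^* \subseteq E$ yields $VG(F) \subseteq VG(V,S^*)$.

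Next I would classify the weights of the edges of $c_{e'}$ in $CVG(V,S^*)$. The edge $e'$ has weight $|e'|$ and $e$ has weight $|e|$; every non-$F$ edge on $c_{e'}\setminus\{e'\}$ has Euclidean weight strictly less than $|e'|$, which I would derive from the MST cycle property applied to $CMST(F) = MST(CVG(F))$ together with $e' \notin CMST(F)$ (using that $F$-edges weigh $0$ in $CVG(F)$ and that general position yields strict inequalities); finally, each remaining $F$-edge $f \neq e$ on the cycle has weight $0$ if $f \in S^*$ and weight $|f|$ otherwise. Consequently the strict maximum-weight edge of $c_{e'}$ in $CVG(V,S^*)$ belongs to $\{e\} \cup \bigl((F \cap c_{e'}) \setminus S^*\bigr) \subseteq F$. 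By the MST cycle property applied to $CMST(V,S^*)$, this strict maximum edge cannot belong to $CMST(V,S^*)$, contradicting $F \subseteq CMST(V,S^*)$. Therefore $e \in S^*$, and $S \subseteq S^*$.

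I expect the main obstacle to be establishing the strict weight bound for non-$F$ edges on $c_{e'}\setminus\{e'\}$: without it such an edge could itself be the strict maximum on the cycle, and then the contradiction would fail because non-$F$ edges need not lie in $F \subseteq CMST(V,S^*)$. Justifying the bound requires careful bookkeeping in $CVG(F)$, where $F$-edges contribute weight $0$, so that the maximum $CVG(F)$-weight along the $CMST(F)$-path between the endpoints of $e'$ is attained only on non-$F$ edges and, because $e' \notin CMST(F)$ with distinct Euclidean distances, must be strictly below $|e'|$.
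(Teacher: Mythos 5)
Your proof is correct, and it takes a genuinely different route from the paper's. The paper argues minimality and minimum-ness separately: for minimality it assumes $F \subseteq CMST(V, S\setminus\{e\})$ and exhibits the lighter tree $CMST(V,S\setminus\{e\}) \setminus \{e\} \cup \{e'\}$; for minimum-ness it first invokes minimality to conclude $S \setminus S^* \neq \emptyset$ and then repeats essentially the same $e \leftrightarrow e'$ swap inside $CMST(V,S^*)$. You instead prove the single stronger statement that $S$ is contained in \emph{every} valid constraint set, and you obtain the contradiction from the cycle property applied to the whole fundamental cycle $c_{e'}$ viewed in $CVG(V,S^*)$, rather than from an explicit edge swap. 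This buys two things. First, the standard cycle property already guarantees that the unique heaviest edge of a cycle lies in no minimum spanning tree, so you never have to argue that $e'$ reconnects the two components of $CMST(V,S^*)\setminus\{e\}$ --- a point the paper glosses over by referring to the cut in $CMST(F)$ rather than in $CMST(V,S^*)$. Second, your argument tolerates the heaviest cycle edge being some $F$-edge other than $e$: any such edge is still in $F \setminus S^*$ and still contradicts $F \subseteq CMST(V,S^*)$. The price is the extra bookkeeping you correctly flag: bounding every non-$F$ edge of $c_{e'}\setminus\{e'\}$ strictly below $|e'|$ via the cycle property in $CVG(F)$ (where $F$-edges weigh $0$), with strictness supplied by the paper's assumption that all pairwise distances are distinct. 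That step is sound, so the unified argument goes through and yields both conclusions of the lemma at once.
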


\begin{proof}
Let us first prove that the set $S$ is minimal. Assume to the contrary that $S$ is not minimal. Thus there is an edge $e \in S$ such that $F \subseteq CMST(V, S \setminus \{ e \})$. We have to disprove that $F$ is a subgraph of $CMST(V, S \setminus \{ e \})$. Let $T$ be $CMST(V, S \setminus \{ e \})$. Since $e \in S$, the edge $e$ was added to $S$ by Algorithm~\ref{alg_CMST}. Thus, there exists an edge $e' \in MST(EVG(F))$ such that $e' \notin F$ (and thus $e' \notin CMST(F)$); and there is a cycle $c_{e'}$ in $CMST(F) \cup \{ e'\}$ such that $e \in c_{e'}$ and $|e| > |e'|$. Notice, that $T^* = T \setminus \{ e \} \cup \{e' \}$ is a tree. Moreover, $T^*$ is a subgraph of $CVG(V,S \setminus \{ e \})$ whose weight is smaller than the weight of $T$. This is a contradiction to $T$ being $CMST(V, S \setminus \{ e \})$.

Let us now prove that the set $S$ is minimum. Assume to the contrary that there is another set $S^* \subseteq E$ of constraints, such that $F \subseteq CMST(V, S^*)$ and $|S^*| < |S|$. We have to show a contradiction to $F \subseteq CMST(V, S^*)$.

We proved that $S$ is minimal, thus $S \setminus S^* \neq \emptyset$. Let $e = \overline{uv}$ be an edge of $S \setminus S^*$. Notice, that $e \in F$. Consider the cut $U$, $V \setminus U$ in $CMST(F)$ such that $u \in U$, $v \in V \setminus U$ (refer to Fig.~\ref{fig:cut}). Algorithm~\ref{alg_CMST} added $e$ to $S$, thus there exists an edge $e' \in MST(EVG(F))$ such that $e' \notin E$, $e$ and $e'$ belong to the cycle $c_{e'}$ in $CMST(F) \cup \{e'\}$ and $|e| > |e'|$. Notice, that $e'$ belongs to the cut $U$, $V \setminus U$ in $VG(F)$. Moreover, $e' \in CVG(V, S^*)$ because $S^* \subseteq E$ and $e'$ does not intersect edges of $F$.
Since $e$ is not a constraint in $CMST(V,S^*)$, then its weight is not $0$ but equal to the Euclidean distance between its endpoints. Thus, the inequality $w(e) > w(e')$ is true for $CVG(V,S^*)$. 

The edge $e'$ is not intersected by any edge of $F$ (because $e'\in VG(F)$) and thus the graph $G = CMST(V, S^*) \setminus \{e\} \cup \{e'\}$ is a plane tree and $w(G) < w(CMST(V, S^*))$. Thus $G$ is the true $CMST(V, S^*)$. However $F \nsubseteq G$ because $e \notin G$. This is a contradiction.
\qed
\end{proof}


Lemmas~\ref{lem:CMST=T} and~\ref{lem:minimal&minimum} show the correctness of Algorithm~\ref{alg_CMST}. However, we said nothing about our strategy of finding cycles in the graph. With a naive approach step $4$ -- $8$ of the algorithm could be quadratic in~$n$. Also, the size of the visibility graph of $F$ can be quadratic in the size of $V$, leading to the complexity of step $1$ of the algorithm equal to $O (n^2 \log n )$. Our first step to improve the running time is to reduce the size of the graph we construct MST for. Lemma~\ref{lem:MST(EVG(F))inCDT(F)} shows that $MST(EVG(F)) \subseteq CDT(F)$. The same lemma can be used to show that $CMST(F)$ can be constructed from $CDT(F)$. Notice that $CDT(F)$ has size $O(n)$. The running time of steps $1$ and $2$ then becomes $O(n \log n)$. Moreover, if $F$ is a plane tree then the construction of $CDT(F)$ can be performed in $O(n)$ time~\cite{doi:10.1137/S0097539795285916}.

\begin{lemma}
\label{lem:MST(EVG(F))inCDT(F)}
Given a plane forest $F=(V,E)$ we have $MST(EVG(F)) \subseteq CDT(F)$. Notice that the edges of both graphs $EVG(F)$ and $CDT(F)$ are assigned weights equal to the Euclidean distance between the endpoints of corresponding edges. Similarly, $MST(CVG(F)) \subseteq CDT(F)$.
\end{lemma}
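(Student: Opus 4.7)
The plan is to adapt the classical proof that the Euclidean minimum spanning tree is a subgraph of the Delaunay triangulation, but with visibility playing the role of ``unobstructed'' and with Chew's circle-based characterization of $CDT(F)$ (the one stated right after Definition~\ref{def:CDT(F)}) replacing the empty-disk property of ordinary Delaunay edges. The main ingredient I will need is the standard cut property for MSTs: for any edge $e$ of $MST(G)$, removing $e$ splits the tree into two components $U,V\setminus U$, and $e$ is the lightest edge of $G$ with one endpoint in $U$ and the other in $V\setminus U$.

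First I would fix an arbitrary $e=\overline{uv}\in MST(EVG(F))$ and show $e\in CDT(F)$. If $e\in E$ the claim is immediate since $F\subseteq CDT(F)$ by Def.~\ref{def:CDT(F)}. Otherwise let $D$ be the open disk having $\overline{uv}$ as a diameter; both endpoints of $e$ lie on the boundary of $D$, so $D$ is a candidate circle for Chew's criterion. I will verify Chew's second condition by contradiction: assume some $p\in D\cap V$ is visible from both $u$ and $v$ with respect to $E$. Then $\overline{up},\overline{vp}\in EVG(F)$, and since $p$ lies strictly inside the Thales disk of $\overline{uv}$ the angle at $p$ exceeds $\pi/2$, forcing $|up|<|uv|$ and $|vp|<|uv|$. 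Applying the cut property to the partition $(U,V\setminus U)$ induced by removing $e$ from the MST, the vertex $p$ belongs to exactly one side; whichever side it is on, one of $\overline{up}$ or $\overline{vp}$ crosses the cut and is strictly shorter than $e$, contradicting the minimality of $e$. Hence no such $p$ exists, and Chew's characterization yields $e\in CDT(F)$.

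For the second statement $MST(CVG(F))\subseteq CDT(F)$ essentially the same argument works, but I have to be careful about the zero-weight edges of $F$ inside $CVG(F)$. Let $e=\overline{uv}\in MST(CVG(F))$. If $e\in E$ then $e\in CDT(F)$ trivially; otherwise the weight of $e$ in $CVG(F)$ is its Euclidean length $|uv|$, and the cut property implies that \emph{no} edge of $F$ can cross the cut $(U,V\setminus U)$ induced by removing $e$, since such an edge would have weight $0<|uv|$. The same visibility/Thales argument then produces a shorter competing edge $\overline{up}$ or $\overline{vp}$ in $CVG(F)$ across the cut, again contradicting minimality. Thus $e\in CDT(F)$.

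The only subtle step is the third one in the first paragraph: one must invoke Chew's alternate definition of $CDT$ rather than Def.~\ref{def:locallyDelaunay}, because the standard ``empty circumcircle of an adjacent triangle'' formulation does not directly speak about visibility, whereas Chew's formulation ``any interior vertex is invisible from at least one endpoint'' is exactly what the cut-property argument produces. Once that translation is made, the proof is a direct visibility-aware analogue of the textbook proof that Euclidean $MST\subseteq DT$.
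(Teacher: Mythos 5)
Your proof is correct and follows essentially the same route as the paper's: both reduce to the classical ``Euclidean $MST\subseteq DT$'' argument by observing that a vertex $p$ in the Thales disk of $\overline{uv}$ that is visible to both endpoints yields two strictly shorter edges of $EVG(F)$ (resp.\ $CVG(F)$), contradicting MST optimality, and then invoking the empty-circle characterization of $CDT$. The only cosmetic differences are that you phrase the exchange step via the cut property where the paper uses the cycle property, and you are somewhat more explicit than the paper both in citing Chew's circle-based characterization (rather than Definition~\ref{def:locallyDelaunay}) and in spelling out the zero-weight bookkeeping for the $CVG(F)$ case, which the paper dismisses with ``in a similar way.''
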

\begin{proof}
Let $e = \overline{uv}$ be an arbitrary edge of $MST(EVG(F))$. If $e \in F$ then by Def.~\ref{def:CDT(F)}, $e \in CDT(F)$. Assume that $e \notin F$. Consider the graph $CDT(F)$. We consider the circle with the line segment $\overline{u v}$ as a diameter. Suppose there is a point $p$ inside this circle, that is visible to $u$ and $v$. Refer to Fig.~\ref{fig:MSTinCDT}. Then we have $|\overline{p u}| < |\overline{u v}|$ and $|\overline{p v}| < |\overline{u v}|$ (by $|\overline{p u}|$ we denote the length of the line segment $\overline{p u}$). Consider the cycle $u \rightarrow p \rightarrow v \rightarrow u$. We know that the heaviest edge on a cycle does not belong to any $MST$. Hence, $e \notin MST(EVG(F))$ because $e$ is the heaviest edge of that cycle. This is a contradiction to the assumption that there is a point $p$ inside the circle. Therefore, there is a circle having $u$ and $v$ on its boundary that does not have any vertex of $F$ visible to $u$ and $v$ in its interior. By Def.~\ref{def:locallyDelaunay} the pair of vertices $u$ and $v$ form a locally Delaunay edge, thus $e \in CDT(F)$.

In a similar we can show that $MST(CVG(F)) \subseteq CDT(F)$.
\qed
\end{proof}

We use the Link/Cut Tree of Sleator and Tarjan~\cite{Sleator:1981:DSD:800076.802464} to develop an efficient solution for the forth step (lines $4$ -- $8$) of Algorithm~\ref{alg_CMST}. The complexity of the algorithm becomes $O(n \log n)$.

\subsection{An Efficient Implementation of Algorithm 1}

In this subsection we develop an efficient solution for the step $4$ -- $8$ of Algorithm~\ref{alg_CMST}. We use Link/Cut Tree type of data structure invented by Sleator and Tarjan in 1981~\cite{Sleator:1981:DSD:800076.802464}. It is also referred in literature as a Dynamic Trees data structure. This data structure maintains a collection of rooted vertex-disjoint trees and supports two main operations: \emph{link} (that combines two trees into one by adding an edge) and \emph{cut} (that divides one tree into two by deleting an edge). Each operation requires $O(\log n)$ time. We consider the version of dynamic trees problem that maintains a forest of trees, each of whose edges has a real-valued cost. We are also interested in a fast search for an edge that has the maximal cost among edges on a tree path between a pair of given vertices. We can implement this with the help of original \emph{mincost}(vertex $v$) operation (that returns the vertex $w$ closest to the root of $v$ such that the edge between $w$ and its parent has minimum cost among edges on the tree path from $v$ to the root of $v$). However, for that to work in our case we have to negate weights of our problem. Alternatively, we can create \emph{maxcost}(vertex $v$) operation, by implementing it similarly to \emph{mincost}(vertex $v$). Then we can use unaltered weights. Below is a brief list of operations we use. Refer to~\cite{Sleator:1981:DSD:800076.802464} for a detailed description and implementation.

\begin{itemize}
\item \textbf{parent}(vertex $v$): Returns the parent of $v$, or $null$ if $v$ is a root and thus has no parent.
\item \textbf{root}(vertex $v$): Returns the root of the tree containing $v$. 
\item \textbf{cost}(vertex $v$): Returns the cost of the edge ($v$, \textbf{parent}($v$)). This operation assumes that $v$ is not a tree root. 
\item \textbf{maxcost}(vertex $v$): Returns the vertex $u$ closest to \textbf{root}($v$) such that the edge ($u$, \textbf{parent}($u$)) has maximum cost among edges on the tree path from $v$ to \textbf{root}($v$). This operation assumes that $v$ is not a tree root. 
\item \textbf{link}(vertex $v$, $u$, real $x$): Combines the trees containing $v$ and $u$ by adding the edge ($v$, $u$) of cost $x$, making $u$ the parent of $v$. This operation assumes that $v$ and $u$ are in different trees and $v$ is a tree root. 
\item \textbf{cut}(vertex $v$): Divides the tree containing vertex $v$ into two trees by deleting the edge ($v$, \textbf{parent}($v$)); returns the cost of this edge. This operation assumes that $v$ is not a tree root.
\item \textbf{update edge}(vertex $v$, real $x$): Adds $x$ to the cost of 
the edge ($v$, \textbf{parent}($v$)).
\item \textbf{lca}(vertex $v$, $u$): Returns the lowest common ancestor of $v$ and $u$, or returns \textbf{null} if such an ancestor does not exist.
\end{itemize}

The data structure uses $O(n)$ space and each of the above operations takes $O(\log n)$ time. Algorithm~\ref{alg_CMST_Imp} finds the minimum set $S \subseteq E$ for a given forest $F = (V, E)$ such that $F \subseteq CMST(V,S)$. Algorithm's complexity is $O(n \log n)$.

\begin{algorithm}[H]
\caption{$O(n \log n)$ time implementation of Algorithm~\ref{alg_CMST}}\label{alg_CMST_Imp}
\KwIn{plane forest $F = (V, E)$}
\KwOut{minimum set $S \subseteq E$ of constraints such that $F \subseteq CMST(V,S)$}
\BlankLine
Construct $T' = MST(CDT(F))$\;
Construct $CMST(F) = MST(CDT^\circ (F))$\;
Initialize $S = \emptyset$\;
Construct Link/Cut tree data structure of $CMST(F)$, such that each edge $e = \overline{uv} \in CMST(F)$ gets cost $w(e)$, equal to the Euclidean distance between vertices $u$ and $v$\;
\ForEach{$e' = \overline{v_1v_2} \in T'$ (such that $e' \notin CMST(F)$)}{
	\If{\textbf{lca}($v_1$, $v_2$) $\neq$ \textbf{null}}{
		$u$ = \textbf{lca}($v_1$, $v_2$)\;
		$p$ = \textbf{parent}($u$)\;
		\lIf{$p \neq null$}{$y$ = \textbf{cut}($u$)}
		\If{\textbf{parent}($v_1$) $\neq null$}{
			$v = $ \textbf{maxcost}($v_1)$\;
			$x = $ \textbf{cost}($v$)\;
			\While{$\big(w(e) < x\big)$}{
				\textbf{update edge}($v$, $-x$)\;
				set $S \leftarrow S \cup \{ (v, $  \textbf{parent}($v$)$) \}$\;
				$v = $ \textbf{maxcost}($v_1)$\;
				$x =$ \textbf{cost}($v$)\;
			}		
		}
		\If{\textbf{parent}($v_2$) $\neq null$}{
			$v = $ \textbf{maxcost}($v_2)$\;
			$x = $ \textbf{cost}($v$)\;
			\While{$\big(w(e) < x\big)$}{
				\textbf{update edge}($v$, $-x$)\;
				set $S \leftarrow S \cup \{ (v, $  \textbf{parent}($v$)$) \}$\;
				$v = $ \textbf{maxcost}($v_2)$\;
				$x =$ \textbf{cost}($v$)\;
			}		
		}
		\lIf{$p \neq null$}{\textbf{link}($u$, $p$, $y$)}
	}
}
Set $S \leftarrow S \cap E$\;
\end{algorithm}

If $F \subseteq MST(V)$ then $F \subseteq CMST(V, \emptyset)$. In other words we do not require constraints at all to obtain Constrained MST that will contain $F$. It is interesting to consider the opposite problem. How big can the set of constraints be? Figure~\ref{fig:worst_case} shows the worst-case example, where the set $S$ of constraints contains all the edges of $F$, thus $|S| = n-1$. 

\begin{figure}[h]
  \begin{minipage}[c]{0.5\textwidth}
    \includegraphics[width=0.94\textwidth]{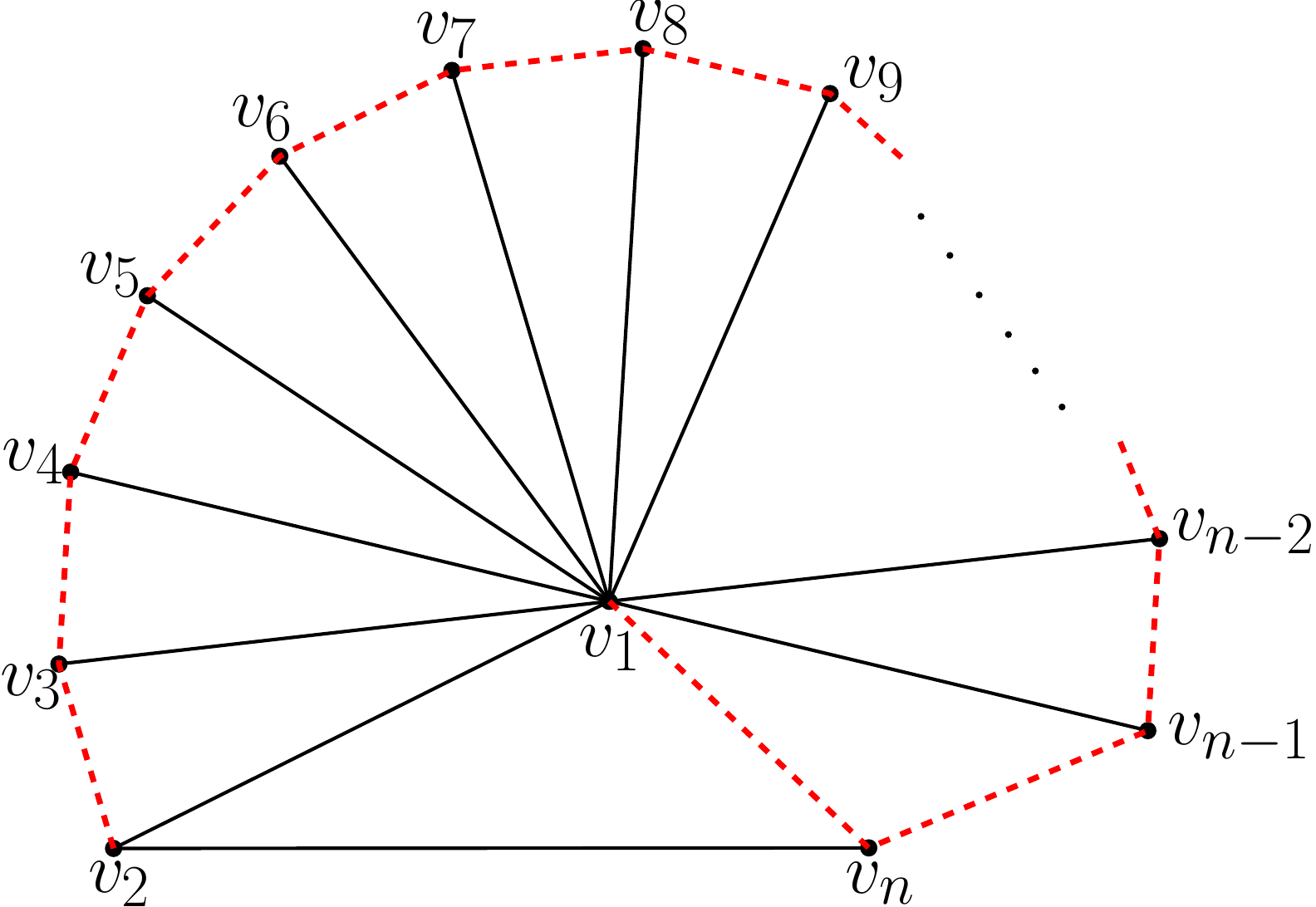}
  \end{minipage}\hfill
  \begin{minipage}[c]{0.5\textwidth}
\caption{Worst case example showing $n-1$ constraints. The input tree $F$ is drawn using solid lines. The $MST$ of the set $\{v_1, v_2, \ldots v_n\}$ is dashed.}
  \label{fig:worst_case}
  \end{minipage}
\end{figure}

\begin{theorem}
Given a plane forest $F = (V, E)$, Algorithm~\ref{alg_CMST_Imp} constructs the minimum set $S \subseteq E$ of constraints such that $F \subseteq CMST(V,S)$. The running time of Algorithm~\ref{alg_CMST_Imp} is $O(n \log n)$, where $n = |V|$.
\end{theorem}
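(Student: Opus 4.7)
The plan is to prove two things: that Algorithm~\ref{alg_CMST_Imp} produces the same set $S$ as the already-verified Algorithm~\ref{alg_CMST}, and that it does so in $O(n\log n)$ time. Correctness of the output then follows immediately from Lemmas~\ref{lem:CMST=T} and~\ref{lem:minimal&minimum}. For the preliminary steps, Lemma~\ref{lem:MST(EVG(F))inCDT(F)} lets me replace $MST(EVG(F))$ by $MST(CDT(F))$ and $CMST(F)$ by $MST(CDT^\circ(F))$, so lines~1--3 of the two algorithms compute the same trees.

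For the main loop I would fix an edge $e' = \overline{v_1 v_2} \in T' \setminus CMST(F)$ and argue that the edges added to $S$ during its iteration coincide with $\{e \in c_{e'} : w(e) > w(e')\}$, matching lines~5--8 of Algorithm~\ref{alg_CMST}; the final intersection $S \leftarrow S \cap E$ then restricts to $c_{e'} \cap E$. The \textbf{lca} call decides whether a cycle $c_{e'}$ exists in $CMST(F) \cup \{e'\}$. If so, temporarily cutting the LCA from its parent isolates the tree-path between $v_1$ and $v_2$, so that the repeated calls to \textbf{maxcost}($v_1$) and \textbf{maxcost}($v_2$) walk the two halves of that path in decreasing order of current cost. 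The \textbf{while} loops select exactly the edges whose cost exceeds $w(e')$ on each half. The \textbf{update edge} call reduces each selected edge's cost to~$0$, so future \textbf{maxcost} queries skip it; this preserves consistency with the set semantics of Algorithm~\ref{alg_CMST}, because an edge that has already been placed in $S$ need not be "added" a second time.

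For the running time, $CDT(F)$ has $O(n)$ edges and is constructed in $O(n \log n)$ time (in $O(n)$ when $F$ is a tree), so lines~1--2 cost $O(n\log n)$; initializing the Link/Cut structure is $O(n)$. Every primitive operation on that structure runs in $O(\log n)$. The outer loop iterates $|T'| = O(n)$ times and performs $O(1)$ operations outside the inner \textbf{while} loops. I will bound the total number of \textbf{while}-loop iterations by amortization: each successful iteration commits a fresh edge to $S$ and zeroes its cost, so it is never selected again, giving at most $|S| \le n-1$ successful iterations in total; each outer iteration contributes at most two failing guard checks (one per endpoint $v_1, v_2$), totalling $O(n)$. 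Multiplying the $O(n)$ Link/Cut operations by $O(\log n)$ per operation yields the claimed $O(n\log n)$ bound.

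The main obstacle I anticipate is rigorously justifying the cost-zeroing invariant together with the LCA cut/link bracketing. I need to verify that after \textbf{update edge}$(v,-x)$ the modified edge has cost $0$, and that this is strictly less than $w(e'')$ for every edge $e''$ examined in subsequent outer iterations, so the zeroed edge is permanently excluded from qualifying \textbf{maxcost} results; because every edge has strictly positive Euclidean length, if a zeroed edge is ever returned by \textbf{maxcost} the \textbf{while}-guard $w(e') < x$ fails and the loop exits harmlessly. Once this invariant is established, and once the temporary \textbf{cut} on the LCA (restored by the matching \textbf{link} at the end of the iteration) is shown not to disturb the global Link/Cut state, the iteration-by-iteration equivalence with Algorithm~\ref{alg_CMST} follows and combines with Lemmas~\ref{lem:CMST=T} and~\ref{lem:minimal&minimum} to complete the proof.
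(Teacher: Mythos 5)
Your proposal is correct and follows exactly the route the paper intends: correctness by showing Algorithm~\ref{alg_CMST_Imp} computes the same set as Algorithm~\ref{alg_CMST} (via Lemma~\ref{lem:MST(EVG(F))inCDT(F)} for lines 1--2 and the cost-zeroing/\textbf{maxcost} equivalence for the loop), with the output's minimality inherited from Lemmas~\ref{lem:CMST=T} and~\ref{lem:minimal&minimum}, and the $O(n\log n)$ bound from charging each \textbf{while}-iteration to a freshly zeroed edge. The paper states this theorem without an explicit proof, and your amortization argument and the observation that a zeroed edge can only trigger a harmless failed guard are precisely the details it leaves implicit.
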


\section{Constrained Gabriel Graph Algorithm}
\label{sec:CGG}

\textbf{Problem 2:} We are given a plane graph $I = (V, E)$ of $|V| = n$ points. Find the minimum set $S \subseteq E$ of edges such that $I \subseteq CGG(V, S)$. In other words, we are interested in the minimum set $S \subseteq E$ of edges such that $CGG(V, S) = CGG(I)$.

Algorithm~\ref{alg_BETA} given in Sect.~\ref{sec:beta-skeleton} can be successfully applied to $CGG$s. However, the algorithm presented in this section is significantly simpler and requires only local information about an edge in question. We can decide whether or not the edge $e \in I$ should be in $S$ by considering at most two triangles adjacent to $e$ in $CDT(I)$.
This can be done in constant time. We exploit the fact that $S$, constructed of all non locally Gabriel edges of $I$, is necessary and sufficient. Consider the following lemma.

\begin{lemma}
\label{lem:S-notLocallyGabriel}
Let $I = (V, E)$ be a plane graph and let $S$ be the set of all the edges of $E$ that are not locally Gabriel. Then $S$ is the minimum set of constraints such that $I \subseteq CGG(V,S)$.
\end{lemma}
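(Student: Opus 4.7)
The plan is to establish the two halves of the lemma separately: (a) sufficiency, $I \subseteq CGG(V,S)$, and (b) minimality in size.

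Part (b) is a short monotonicity argument. Let $S' \subseteq E$ be any constraint set with $I \subseteq CGG(V,S')$; I will show $S \subseteq S'$, from which $|S| \le |S'|$ follows immediately. Fix $e = \overline{uv} \in S$. By the definition of $S$, $e$ is not locally Gabriel in $I$, so there exists $p \in V$ inside the disk $D_e$ with diameter $e$ that is visible from both $u$ and $v$ with respect to $E$. Since $S' \subseteq E$, removing constraints can only increase visibility, so $p$ is also visible from both $u$ and $v$ with respect to $S'$. Hence $e$ is not locally Gabriel in $(V,S')$, and for $e \in I \subseteq CGG(V,S')$ to hold, we must have $e \in S'$. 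Thus every element of $S$ lies in $S'$.

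Part (a) is the geometric core. Every $e \in S$ trivially lies in $CGG(V,S)$ as a constraint, so fix $e = \overline{uv} \in E \setminus S$; by construction $e$ is locally Gabriel in $I$, and the goal is to show it is still locally Gabriel in $(V,S)$. Suppose for contradiction that some $p \in V \cap D_e$ is visible from both $u$ and $v$ in $(V,S)$. Since $e$ is locally Gabriel in $I$, $p$ must be blocked from one endpoint in $(V,E)$, so WLOG some edge $e_1 = \overline{ab} \in E$ crosses $\overline{up}$. Visibility of $p$ from $u$ with respect to $S$ forces $e_1 \notin S$, so $e_1$ is itself locally Gabriel in $I$.

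From here the plan is to iterate, in the spirit of Steps~2 and~3 of the proof of Lemma~\ref{lem:CMSTinVG(T)}: the chord $\overline{up}$ lies inside the closed disk $D_e$, so $e_1$ must enter $D_e$, and a short case analysis (using that $e$ and $e_1$ are both locally Gabriel in $I$ and, being edges of a plane graph, cannot cross each other) forces an endpoint of $e_1$ to lie in $D_e$. That endpoint, being in $D_e$, cannot be visible from both $u$ and $v$ in $(V,E)$ since $e$ is locally Gabriel in $I$, hence it is itself blocked by a further edge of $E \setminus S$, whose crossing produces yet another vertex of $V$ inside $D_e$. Since $V \cap D_e$ is finite, this descent must eventually terminate at a vertex visible from both $u$ and $v$ in $(V,E)$, contradicting the fact that $e$ is locally Gabriel in $I$. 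The main obstacle is the geometric bookkeeping in the iteration — in particular ruling out the configuration where a blocking edge crosses $D_e$ with both endpoints outside the disk, which is handled by invoking the locally Gabriel property of the blocking edge itself to extract a fresh vertex of $V$ inside $D_e$.
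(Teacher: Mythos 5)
Your Part (b) is correct and is essentially the paper's own necessity argument: any point of $V$ in the diametral disk $D_e$ that witnesses the failure of the locally Gabriel property with respect to $E$ remains a witness with respect to any $S' \subseteq E$, because shrinking the obstacle set can only enlarge visibility; hence every non-locally-Gabriel edge must appear in every valid constraint set. For Part (a) the paper simply observes that every edge of $E \setminus S$ is locally Gabriel by the definition of $S$ and declares sufficiency. You have correctly noticed that this conceals a real issue: ``locally Gabriel'' in the definition of $CGG(V,S)$ is evaluated with visibility relative to $S$, which is more permissive than visibility relative to $E$, so a point of $D_e$ that was blocked only by edges of $E\setminus S$ could in principle become a visible witness once only $S$ remains. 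Your decision to actually prove that local Gabrielness transfers from $(V,E)$ to $(V,S)$ therefore addresses a genuine obligation that the paper's two-line proof skips.

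However, your argument for that transfer breaks down at exactly the points where the work has to happen. (i) Nothing forces an endpoint of the blocking edge $e_1$ to lie in $D_e$: $e_1$ only has to cross $\overline{up}$ and avoid $e$, and a long edge can slice across $D_e$ with both endpoints outside. Your proposed remedy --- invoking the locally Gabriel property of $e_1$ to ``extract a fresh vertex of $V$ inside $D_e$'' --- does not work as stated, because that property constrains the disk $D_{e_1}$, not $D_e$; what it actually yields in this configuration (one can check that $p \in D_{e_1}$ here) is a vertex-chase inside $D_{e_1}$, sending the recursion off into a different disk with no mechanism for returning a vertex to $D_e$. (ii) Even in the favourable case, your descent asserts that each newly found vertex of $V \cap D_e$ is ``blocked by a further edge of $E\setminus S$.'' That is justified only for $p$ itself, since only $p$ is assumed visible with respect to $S$; a later vertex may well be blocked by an edge of $S$, and then the locally-Gabriel tool you rely on in (i) is unavailable. (iii) Finiteness of $V \cap D_e$ does not by itself force termination at a vertex visible from both $u$ and $v$: you need a strictly decreasing quantity so that the descent cannot revisit vertices. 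In the paper's analogous iterations (Step~2 of the proof of Lemma~\ref{lem:CMSTinVG(T)}, and Lemma~\ref{lem:beta_alg_empty_triangle}) each new blocking endpoint is confined to a region excluding the previous vertex, e.g.\ by strictly decreasing distance to a fixed segment; your sketch identifies no such measure. Until (i)--(iii) are repaired, Part (a) is not established.
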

\begin{proof}
Let us first show that $I \subseteq CGG(V,S)$. Indeed, $S \subseteq I$ so we can add edges to $S$ to obtain $E$. Thus, we add $E \setminus S$, which, by definition, consists of locally Gabriel edges only. Therefore, there is no edge in $I$ that does not belong to $CGG(V,S)$. So, $S$ is a sufficient set of constraints.

Let us show that $S$ is necessary. Let $S'$ be a subset of edges in $I$ such that $I \subseteq CGG(V,S')$ and $S \setminus S' \neq \emptyset$. Let $e = (u,v)$ be an edge in $S \setminus S'$. By definition, the edge $e$ is not locally Gabriel and thus there exists a point $p \in V$ such that the circle with $e$ as a diameter contains $p$. Since $I \subseteq CGG(V,S')$, then in particular $e \in CGG(\{u, v, p\}, \emptyset )$ which is false since $e$ is not locally Gabriel. Hence such an edge cannot exist and thus $S$ is minimum.
\qed
\end{proof}

The above lemma gives us a tool for constructing $S$. We need to find a proper graph that is both relatively small in size and keeps required information about each edge of $I$ easily accessible. As you may have already guessed, $CDT(I)$ is a good candidate. We show in Lemma~\ref{lem:Hierarchy} that $CGG(I) \subseteq CDT(I)$. Thus if an edge is locally Gabriel in $CGG(I)$ then it is locally Delaunay in $CDT(I)$. Applied to our problem, it means that if an edge of $I$ is not a constraint in $CGG(I)$ then it is not a constraint in $CDT(I)$. The opposite however is not true. 
Refer to Fig.~\ref{fig:not_locally_Gabriel} for an example of an edge that is locally Delaunay, but not locally Gabriel. 
\begin{wrapfigure}{r}{0.42\textwidth}
\vspace{-10pt}
\centering
\includegraphics[width=0.41\textwidth]{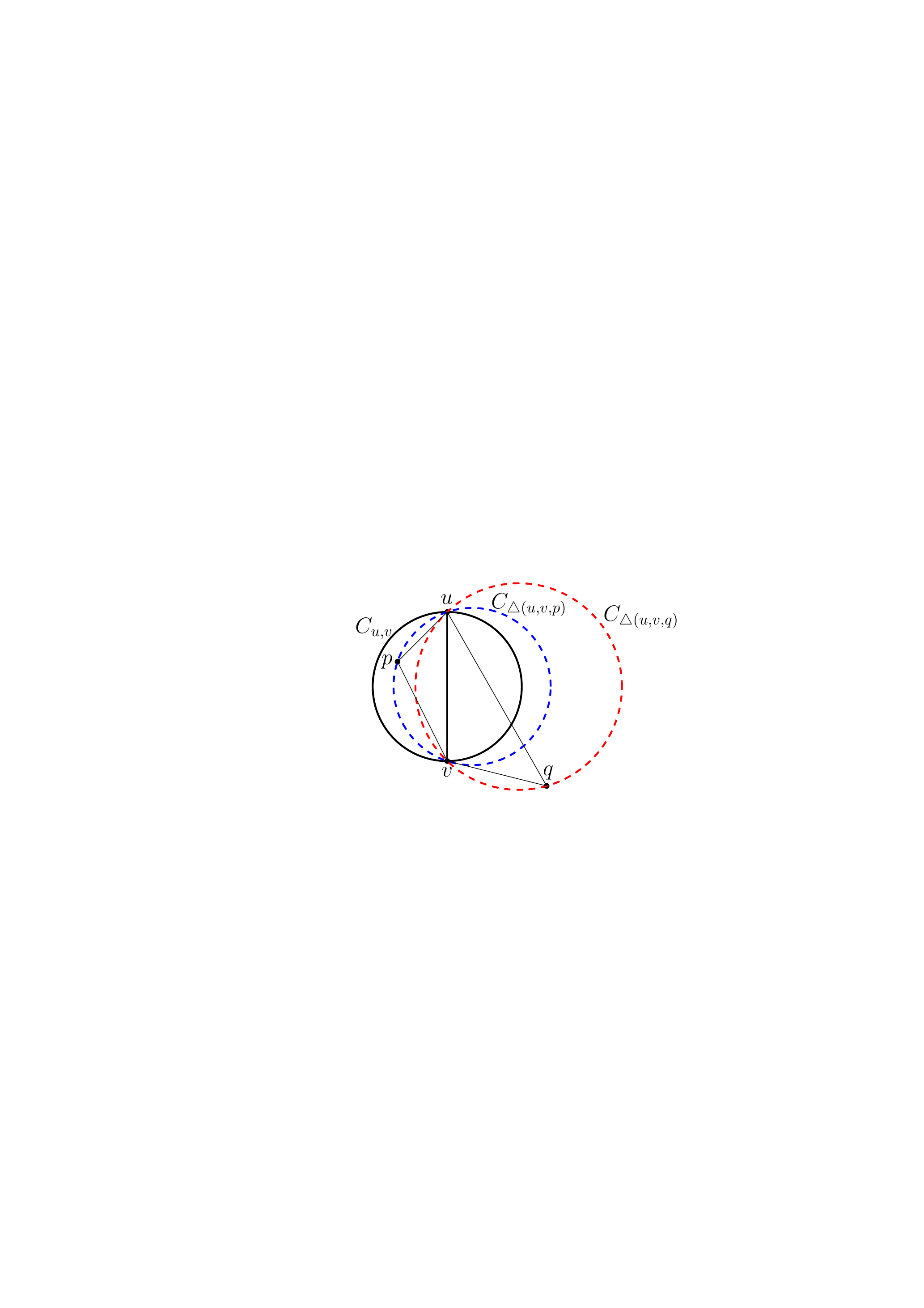}
\caption{The edge $\overline{uv}$ is locally Delaunay, but not locally Gabriel. If $\overline{uv} \in I$ then $\overline{uv} \notin S_{CDT(I)}$ and $\overline{uv} \in S_{CGG(I)}$. If $\overline{uv} \notin I$ then $\overline{uv} \notin CGG(I)$.}
\vspace{-10pt}
\label{fig:not_locally_Gabriel}
\end{wrapfigure}


We prove in Lemma~\ref{lem:constraint_travel_main} that $S_{CGG(I)} \supseteq S_{CDT(I)}$, where $S_G$ denotes the \textbf{minimum} set of constraints of $G$, such that $I \subseteq G$. This means that if some edge of $I$ is a constraint in $CDT(V, S_{CDT(I)})$ then the same edge is also a constraint in $CGG(V, S_{CGG(I)})$. Notice, that $CDT(V, S_{CDT(I)}) = CDT(I)$ and $CGG(V, S_{CGG(I)}) = CGG(I)$. Thus, if $S_{CDT(I)}$ is known then we can speed up our algorithm by initializing $S$ with $S_{CDT(I)}$.

The only question that is left unanswered is how can we tell in constant time if an edge of $I$ is locally Gabriel or not by observing $CDT(I)$. The following lemma 
answers this question.

\begin{lemma}
\label{lem:edge_of_CDT}
Let $I = (V, E)$ be a plane graph and let $\overline{uv}$ be an edge of $I$ such that $\overline{uv}$ is \textbf{not} a constraint edge of $CDT(I)$. Let $\triangle(u,v,p)$ be a triangle in $CDT(I)$, and let $C_{u,v}$ be the circle with diameter $\overline{uv}$. If $p \notin C_{u,v}$ then there is \textbf{no} point $x$ of $V$ on the same side of $\overline{uv}$ as $p$ such that $x \in C_{u,v}$ and $x$ is visible to both $u$ and $v$ in $CGG(I)$. Refer to Fig.~\ref{fig:uv_not_constrained}.
\end{lemma}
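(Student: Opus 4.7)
My plan is to reduce the statement to the ``strong'' circumcircle characterization of $CDT(I)$ by first showing that the $p$-side of the Gabriel disk $C_{u,v}$ is contained in the circumcircle $C_p$ of $\triangle(u,v,p)$. I choose coordinates with $u=(-r,0)$ and $v=(r,0)$, where $r=|\overline{uv}|/2$, so that $C_{u,v}=\{x^2+y^2\le r^2\}$ and the $p$-side is the open upper half-plane $H_p$. Since $p\notin C_{u,v}$, the angle $\angle upv$ is strictly less than $\pi/2$, which forces the circumcenter of $C_p$ to lie on the $p$-side of $\overline{uv}$, at a point $(0,c)$ with $c>0$ and squared radius $r^2+c^2$. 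Then for any $(x_0,y_0)$ satisfying $x_0^2+y_0^2\le r^2$ and $y_0\ge 0$, the computation
\[
x_0^2+(y_0-c)^2 \;=\; (x_0^2+y_0^2)-2cy_0+c^2 \;\le\; r^2+c^2
\]
yields the containment $C_{u,v}\cap\overline{H_p}\subseteq C_p$.

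\noindent\textbf{Reduction via the strong $CDT$ property.} Since $\triangle(u,v,p)\in CDT(I)$, the triangle-based characterization equivalent to Definition~\ref{def:CDT(F)} states that $C_p$ contains no vertex of $V$ that is visible (w.r.t.\ $E$) to all three of $u$, $v$, and $p$. Combined with the containment above, any candidate $x\in V\cap C_{u,v}\cap H_p$ that is visible to both $u$ and $v$ lies in $C_p$, and hence must fail to be visible to $p$. So the lemma reduces to deriving a contradiction from the existence of such an $x$.

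\noindent\textbf{Minimality step and main obstacle.} Suppose, toward contradiction, that the set $X=\{x\in V\cap C_{u,v}\cap H_p : x\text{ is visible to both }u\text{ and }v\}$ is non-empty, and pick $x\in X$ minimizing the perpendicular distance from $\overline{uv}$. By the reduction, some constraint edge $e^{*}=\overline{ab}\in E$ properly crosses $\overline{xp}$. The triangle $\triangle(u,v,x)$ lies entirely in the convex region $C_{u,v}\cap\overline{H_p}$, and none of its three sides admits a proper interior crossing by an edge of $E$: the base $\overline{uv}$ is in $E$, while $\overline{ux}$ and $\overline{vx}$ are protected because $x$ is visible to both $u$ and $v$. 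My goal is to use $e^{*}$ (possibly together with a short recursive descent) to produce a new vertex $x'\in V$ strictly inside $\triangle(u,v,x)$. Such an $x'$ automatically sits in $C_{u,v}\cap H_p$ by convexity of $C_{u,v}$; the analogous boundary-protection argument applied to $\triangle(u,v,x')$ yields that $x'$ is visible to both $u$ and $v$; and since $x'$ is strictly interior to $\triangle(u,v,x)$ its perpendicular distance to $\overline{uv}$ is strictly smaller than that of $x$, contradicting minimality. The main obstacle I anticipate is precisely this extraction step, especially when $p$ lies outside $\triangle(u,v,x)$: there the blocking edge $e^{*}$ need not have either endpoint inside $\triangle(u,v,x)$, and one may have to walk through the $CDT$ triangles fanning around $p$, exploiting the locally Delaunay property of $\overline{up}$ and $\overline{vp}$, to pinpoint the first vertex that falls inside $C_{u,v}\cap H_p$ and serves as the required lower counterexample.
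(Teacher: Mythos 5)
Your opening two steps are correct and coincide with how the paper begins: the coordinate computation showing $C_{u,v}\cap\overline{H_p}\subseteq C_p$ is exactly the paper's observation that $\angle vpu$ is acute while $\angle vxu$ is obtuse, so any candidate $x$ lies inside the circumcircle $C'=C_p$ of $\triangle(u,v,p)$, and hence (granting your triangle-form empty-circle property, which you state but do not justify) $x$ cannot be visible to $p$. The genuine gap is the part you yourself flag as the ``main obstacle'': the descent never gets off the ground. The blocking edge $e^{*}$ you obtain crosses $\overline{xp}$, and since $\triangle(u,v,x)\subseteq C_{u,v}$ while $p\notin C_{u,v}$, essentially all of $\overline{xp}$ lies outside $\triangle(u,v,x)$; so $e^{*}$ need not have an endpoint inside $\triangle(u,v,x)$, nor anywhere in the lens $C_{u,v}\cap H_p$, and no strictly lower witness $x'$ is produced. (Even if such an $x'$ were found, the claim that it is automatically visible to $u$ and $v$ by ``boundary protection'' is not immediate, since a blocking edge could lie entirely inside $\triangle(u,v,x)$; that is repairable by a further descent, but the extraction itself is not.)

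The paper closes the argument without any descent, and you should adopt its route. Since $CDT(I)$ triangulates all of $V$, the vertex $x$ cannot lie in the interior of $\triangle(u,v,p)$, so it is separated from that triangle by one of its sides $\overline{up}$ or $\overline{vp}$ --- say $\overline{up}$, so that the segment $\overline{vx}$ crosses $\overline{up}$ (the crossing point lies on the segment, not its extension, because $\overline{vx}\subseteq C_{u,v}$ while $p\notin C_{u,v}$). Because $x$ lies inside $C'$ on the far side of $\overline{up}$ and is reachable there, $\overline{up}$ fails to be locally Delaunay; being an edge of $CDT(I)$, it must therefore be a constraint, i.e., an edge of $E$. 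But then $\overline{vx}$ crosses a constraint of $CDT(I)$, and since $S_{CGG(I)}\supseteq S_{CDT(I)}$ (Lemma~\ref{lem:constraint_travel_main}) this contradicts the assumption that $x$ is visible to both $u$ and $v$ in $CGG(I)$. The key idea your proposal is missing is that the contradiction comes from a crossing of $\overline{vx}$ with an edge of the \emph{known} triangle $\triangle(u,v,p)$, not from manufacturing a new vertex closer to $\overline{uv}$.
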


\renewcommand\windowpagestuff{%
\centering
\captionsetup{width=.8\linewidth}
\includegraphics[width=0.62\textwidth]{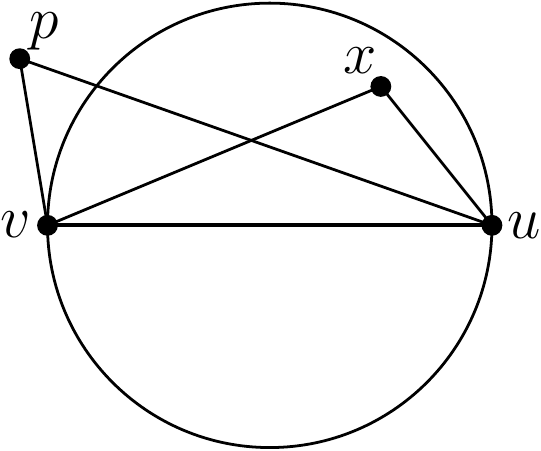}
\captionof{figure}{Proof of Lemma~\ref{lem:edge_of_CDT}.}
\label{fig:uv_not_constrained}
}

\begin{proof} 
\opencutright
\begin{cutout}{2}{0.56\linewidth}{0pt}{10}
Assume to the contrary, that such a point $x$ exists. Since $x$ is visible to both $u$ and $v$ in $CGG(I)$, then since $S_{CGG(I)} \supseteq S_{CDT(I)}$ (refer to Lemma~\ref{lem:constraint_travel_main}) there are no constraints in $CDT(I)$ intersected by the line segments $\overline{ux}$ and $\overline{vx}$. Refer to Fig.~\ref{fig:uv_not_constrained}. Since $p \notin C_{u,v}$, the angle $\angle vpu$ is acute. Because $x \in C_{u,v}$, the angle $\angle vxu$ is obtuse. Consider the circle $C'$ circumscribing the triangle $\triangle(u, v, p)$. Both angles $\angle vpu$ and $\angle vxu$ are supported by the same chord $\overline{uv}$, and since $x$ and $p$ are on the same side of $\overline{uv}$ and $\angle vpu < \angle vxu$, then $x$ is inside $C'$. The edge $\overline{up}$ is not locally Delaunay and since it belongs to $CDT(I)$, then $\overline{up}$ must be a constraint. The line segment $\overline{vx}$ and the constraint $\overline{up}$ intersect, which is a contradiction to the existence of the point~$x$.
\qed
\end{cutout}
\end{proof}

We are ready to present an algorithm for constructing $S$.

\begin{algorithm}[H]
\caption{Minimum set of constraints for constrained Gabriel graphs}\label{alg_CGG}
\KwIn{plane graph $I = (V, E)$}
\KwOut{minimum set $S \subseteq E$ of constraints such that $I \subseteq CGG(V, S)$}
\BlankLine
Construct $CDT(I)$\;
Initialize $S=\emptyset$\;
\ForEach{$\overline{u v} \in E$}{
	Consider the triangle $\triangle (u, v, p)$ and possibly the triangle $\triangle (u, v, q)$ for $q\neq p$ of $CDT(I)$\;
	Consider the circle $C_{u,v}$ with $\overline{u v}$ as a diameter\;
	\If{$p \in C_{u,v}$ or $q \in C_{u,v}$}{set $S \leftarrow S \cup \{ \overline{u v} \}$}
}
\end{algorithm}
 
Notice that $CGG(V,S) \subseteq CDT(I)$, because  $CGG(V, S) = CGG(I) \subseteq CDT(I)$. Lemma~\ref{lem:edge_of_CDT} justifies step $3$ -- $7$ of the algorithm. Notice, that the set $S$ constructed by Algorithm~\ref{alg_CGG}, is the set of all the edges of $I$ that are \textbf{not} locally Gabriel. Thus, by Lemma~\ref{lem:S-notLocallyGabriel}, $S$ is the minimum set of constraints such that $I \subseteq CGG(V,S)$. This concludes the proof of correctness of Algorithm~\ref{alg_CGG}.

The running time of step $1$ of Algorithm~\ref{alg_CGG} is $O(n \log n)$ in the worst case. The step $3$ -- $7$ of the algorithm takes $O(n)$ time. If the input graph $I$ is a simple polygon, a triangulation or a tree then the running time of the first step of the algorithm can be reduced to $O(n)$, leading to a \textbf{linear} total running time for this algorithm.

\begin{theorem}
Given a plane graph $I = (V, E)$, Algorithm~\ref{alg_CGG} constructs the minimum set $S \subseteq E$ of constraints such that $I \subseteq CGG(V, S)$. The running time of Algorithm~\ref{alg_CGG} is $O(n \log n)$, where $n = |V|$.
\end{theorem}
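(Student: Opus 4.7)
The plan is to prove correctness and then bound the running time, leaning on the three lemmas already established: Lemma~\ref{lem:S-notLocallyGabriel} (the minimum $S$ is exactly the set of non locally Gabriel edges of $I$), Lemma~\ref{lem:constraint_travel_main} ($S_{CDT(I)} \subseteq S_{CGG(I)}$), and Lemma~\ref{lem:edge_of_CDT} (locality of the Gabriel test inside $CDT(I)$).

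For correctness, I would argue that the output $S$ of Algorithm~\ref{alg_CGG} coincides with the set of all edges of $E$ that are not locally Gabriel; the claim then follows immediately from Lemma~\ref{lem:S-notLocallyGabriel}. Fix an edge $\overline{uv} \in E$. I would split into two cases. If $\overline{uv}$ is a constraint edge of $CDT(I)$, then $\overline{uv} \in S_{CDT(I)} \subseteq S_{CGG(I)}$ by Lemma~\ref{lem:constraint_travel_main}, so $\overline{uv}$ is not locally Gabriel; to see that Algorithm~\ref{alg_CGG} indeed adds it to $S$, I would use the fact that $\overline{uv}$ being a constraint of $CDT(I)$ means one of its two adjacent triangles $\triangle(u,v,p)$ in $CDT(I)$ fails the locally Delaunay test, hence the third vertex $p$ of that triangle lies strictly inside the circumcircle of the other adjacent triangle and, combined with general position, one of $p$ or $q$ must fall inside $C_{u,v}$ — which is exactly the condition tested in step~6. (A small calculation or appeal to the classical equivalence between the locally Delaunay and circumscribed-circle conditions fills in this detail.) If $\overline{uv}$ is not a constraint of $CDT(I)$, then Lemma~\ref{lem:edge_of_CDT} applies: for each side of $\overline{uv}$, if the third vertex $p$ of the adjacent $CDT(I)$ triangle lies outside $C_{u,v}$ then no other point of $V$ on that side witnesses that $\overline{uv}$ fails to be locally Gabriel. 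Thus $\overline{uv}$ is locally Gabriel if and only if both adjacent third vertices lie outside $C_{u,v}$, which is precisely the negation of the test in step~6. Putting these cases together, Algorithm~\ref{alg_CGG} outputs exactly the non locally Gabriel edges of $E$, and Lemma~\ref{lem:S-notLocallyGabriel} gives minimality.

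For the running time, step~1 constructs $CDT(I)$, which takes $O(n \log n)$ time in the worst case for an arbitrary plane graph $I$ with $O(n)$ edges. The loop in steps~3--7 iterates over the $O(n)$ edges of $E$; for each edge, looking up the (at most two) incident triangles in $CDT(I)$ via the standard doubly-connected edge list, testing the in-circle predicate for $p$ and $q$ against $C_{u,v}$, and updating $S$ all take $O(1)$ time. Thus the total running time is $O(n \log n)$, dominated by the $CDT(I)$ construction.

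The main obstacle I expect is the case analysis for edges that are constraints of $CDT(I)$: one has to make sure that for such an edge $\overline{uv}$, at least one of the two third vertices $p,q$ of the adjacent triangles lies inside $C_{u,v}$, so that Algorithm~\ref{alg_CGG} does add $\overline{uv}$ to $S$. This boils down to the classical fact that a constraint edge of a constrained Delaunay triangulation fails the empty-circumcircle property on at least one side, combined with the geometric observation that this failure forces the third vertex into the smaller disk $C_{u,v}$ whenever the edge is not locally Gabriel — which is automatic here since any constraint of $CDT(I)$ is, by Lemma~\ref{lem:constraint_travel_main}, also a constraint of $CGG(I)$ and hence not locally Gabriel. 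The remainder of the argument is a routine assembly of the cited lemmas.
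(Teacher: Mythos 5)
Your proof is correct and follows essentially the same route as the paper: show that Algorithm~\ref{alg_CGG} outputs exactly the non locally Gabriel edges of $E$ (using Lemma~\ref{lem:edge_of_CDT} to justify the purely local in-circle test against the two adjacent $CDT(I)$ triangles), then invoke Lemma~\ref{lem:S-notLocallyGabriel} for minimality, with the $O(n\log n)$ bound dominated by the $CDT$ construction. If anything, you are more careful than the paper on the one point it leaves implicit — namely that Lemma~\ref{lem:edge_of_CDT} only covers edges that are \emph{not} constraints of $CDT(I)$, so the constraint edges need the separate argument via Lemma~\ref{lem:constraint_travel_main} and the obtuse-angle observation that a non locally Delaunay edge forces $p$ or $q$ into $C_{u,v}$.
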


\section{Constrained $\beta$-skeleton Algorithm}
\label{sec:beta-skeleton}

\textbf{Problem 3:} We are given a plane graph $I = (V, E)$ of $|V| = n$ points and $1 \leq \beta \leq 2$. Find the minimum set $S \subseteq E$ of edges such that $I \subseteq CG_\beta(V, S)$. In other words, we are interested in the minimum $S$ such that $CG_\beta(V, S) = CG_\beta(I)$.

For the constrained Gabriel graph, the problem can be solved in a simpler way. We show that $S$, constructed of all non locally Gabriel edges of $I$, is necessary and sufficient. We can decide in constant time whether or not the edge $e \in I$ is locally Gabriel by considering at most two triangles adjacent to $e$ in $CDT(I)$. Refer to Sec.~\ref{sec:CGG}.

Let $u$, $v$ and $p$ be a triple of vertices of $V$. Recall the definition of $U_{u,v}(\beta)$; see Sect.~\ref{sec:basic_definitions} and Fig.~\ref{fig:triangle_is_empty0}. If $p \in U_{u,v}(\beta)$ and $p$ is visible to both $u$ and $v$, then we say that the vertex $p$ \emph{eliminates} line segment $\overline{uv}$. We prove in Lemma~\ref{lem:Hierarchy} (refer to Sect.~\ref{sec:Hierarchy}) that $CRNG(I) = CG_{\beta = 2}(I) \subseteq CG_{1 \leq \beta \leq 2}(I) \subseteq CG_{\beta = 1}(I) = CGG(I) \subseteq CDT(I)$. The following lemmas further explain a relationship between $CG_\beta(I)$ and $CDT(I)$.

\begin{figure}[h]
\centering
\subfigure[]{
		\label{fig:triangle_is_empty0}
		\includegraphics[width=0.305\textwidth]{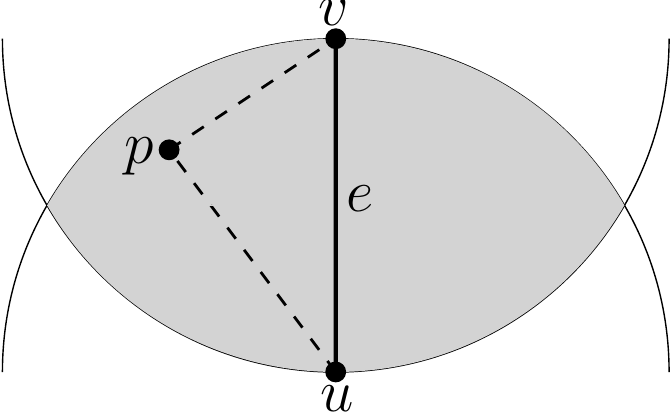}}%
\hspace{0.02\textwidth}
\subfigure[]{
		\label{fig:triangle_is_empty1}
		\includegraphics[width=0.305\textwidth]{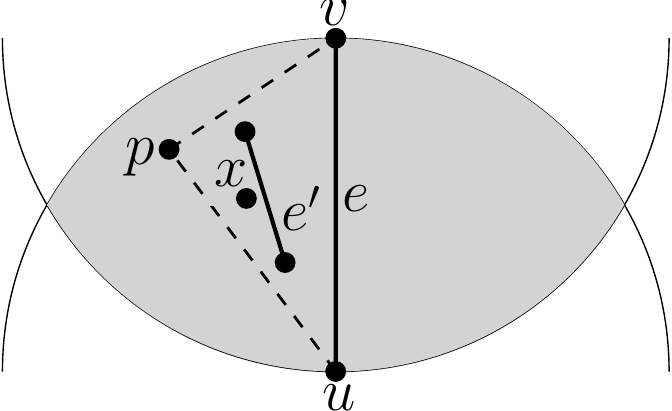}}%
\hspace{0.02\textwidth}
\subfigure[]{
		\label{fig:triangle_is_empty2}
		\includegraphics[width=0.305\textwidth]{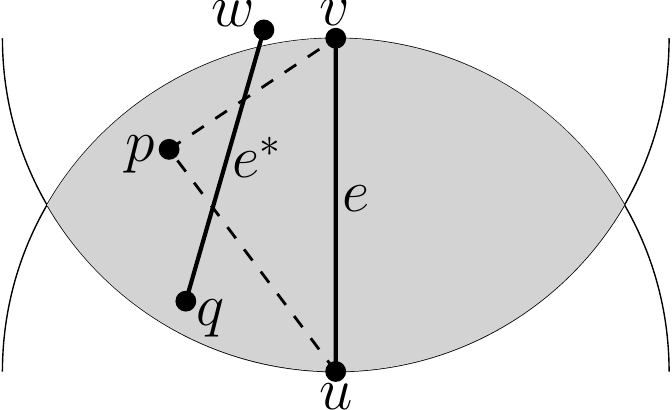}}%
\caption{The neighbourhood $U_{u,v}(\beta=2)$ is highlighted in gray. \textbf{(a)} The vertex $p \in V$ eliminates $\overline{uv}$. \textbf{(b)} If  $p$  is the closest vertex to $\overline{uv} \in CDT(I)$ that eliminates $\overline{uv}$, then $\triangle(u,v,p) \cap V =\emptyset$. \textbf{(c)} If  $p$  is the closest vertex to $\overline{uv} \in CDT(I)$ that eliminates $\overline{uv}$, then $p\in U_{e^*}(\beta)$ for every edge of $e^* \in CDT(I)$ that lies between $p$ and $\overline{uv}$.}
\label{fig:triangle_is_empty}
\end{figure}

\begin{lemma}
\label{lem:beta_alg_empty_triangle}
Given a plane graph $I = (V, E)$ and $1 \leq \beta \leq 2$. Let  $p \in V$  be the closest vertex to the edge $\overline{uv} \in CDT(I)$ that eliminates $\overline{uv}$.
Then $\triangle(u,v,p) \cap V = \emptyset$. Refer to Fig.~\ref{fig:triangle_is_empty}.
\end {lemma}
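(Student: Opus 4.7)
The plan is to argue by contradiction. Assuming $V \cap \mathrm{int}(\triangle(u,v,p)) \neq \emptyset$, I will pick $q$ in this set at minimum perpendicular distance to the line $\ell$ supporting $\overline{uv}$. By general position $q \notin \{u,v,p\}$, and by the choice of $q$ as minimizer, the sub-triangle $\triangle(u,v,q)$ contains no vertex of $V$ in its interior (any such vertex would have strictly smaller distance to $\ell$). My goal is then to extract from $q$ a vertex $q^{*} \in V \cap \mathrm{int}(\triangle(u,v,p))$ that is visible from both $u$ and $v$; such a $q^{*}$ would eliminate $\overline{uv}$ and lie strictly closer to $\overline{uv}$ than $p$, contradicting $p$'s defining property.

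First I will verify that every $q^{*}$ strictly inside $\triangle(u,v,p)$ automatically satisfies $q^{*} \in U_{u,v}(\beta)$ and $\mathrm{dist}(q^{*},\ell) < \mathrm{dist}(p,\ell)$. For the membership, note that $\overline{U_{u,v}(\beta)}$ is the intersection of two closed disks and hence convex; it contains $u$ and $v$ on its boundary (from the definition of $U_{u,v}(\beta)$ for $1 \le \beta \le 2$) and $p$ in its open interior, since $p$ eliminates $\overline{uv}$. Thus $\overline{\triangle(u,v,p)} \subseteq \overline{U_{u,v}(\beta)}$, and any convex combination that assigns positive weight to $p$---in particular any point strictly inside the triangle---lies in the open set $U_{u,v}(\beta)$. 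The distance comparison is immediate, since $u$ and $v$ lie on $\ell$ while the interior of the triangle lies strictly between $\ell$ and $p$ in the perpendicular direction.

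The main obstacle is the visibility step. The crucial enabler is that no edge of $E$ crosses the boundary of $\triangle(u,v,p)$: the edge $\overline{uv}$ belongs to the plane graph $CDT(I) \supseteq I$, while $\overline{up}$ and $\overline{vp}$ are uncrossed by $E$ because $p$ is visible to both $u$ and $v$ in $I$. Hence any edge of $E$ meeting $\mathrm{int}(\triangle(u,v,p))$ lies entirely in its closure. If $q$ is not visible from $u$, then some $e = \overline{q_{1} q_{2}} \in E$ crosses $\overline{uq}$ in its relative interior. General position excludes $q_{1} = u$; the endpoint of $e$ on the $v$-side of line $uq$ must lie in $\overline{\triangle(u,v,q)}$, and since the interior of $\triangle(u,v,q)$ contains no vertex of $V$, by general position this endpoint can only be $v$. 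Therefore $e = \overline{v b}$ with $b$ strictly interior to $\triangle(u,v,p)$ on the $p$-side of line $uq$, and $b$ is automatically visible from $v$ as an endpoint of $e$. I will then iterate this descent, replacing $q$ by $b$: the crossing geometry at $v$ forces the angle $\angle(uvb)$ to be strictly smaller than $\angle(uvq)$, so the process cannot loop and, since $V$ is finite, it must terminate at a vertex $q^{*}$ visible from $u$; a symmetric argument yields visibility from $v$. The principal subtlety---and the step I expect to require the most care---is maintaining this monotone descent across iterations, because each newly produced vertex need not inherit the sub-triangle emptiness property that powered the single-step argument, so one must choose the descent invariant (here, the angle at $v$) so that it strictly decreases at every stage.
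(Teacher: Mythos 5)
Your setup is sound and matches the paper's: the convexity argument placing $\mathrm{int}(\triangle(u,v,p))$ inside $U_{u,v}(\beta)$, the distance comparison, and the observation that no edge of $E$ crosses the boundary of $\triangle(u,v,p)$ (so every blocking edge is trapped in the closed triangle) are all correct. The gap is in the visibility descent. First, the claim that the $v$-side endpoint of the blocking edge $e$ of $\overline{uq}$ ``must lie in $\overline{\triangle(u,v,q)}$'' is not justified: nothing prevents $e$ from entering $\triangle(u,v,q)$ through $\overline{uq}$ and leaving again through the relative interior of $\overline{vq}$ (which is not an edge of any graph), so its $v$-side endpoint can land in the part of $\triangle(u,v,p)$ on the $v$-side of line $uq$ but outside $\triangle(u,v,q)$, a region that can contain vertices at distance greater than $\mathrm{dist}(q,\ell)$, untouched by your minimality of $q$. (This step is repairable: $\mathrm{dist}(\cdot,\ell)$ is affine along $e$ and its value at the crossing point is below $\mathrm{dist}(q,\ell)$, so the nearer endpoint of $e$ is at distance below $\mathrm{dist}(q,\ell)$; being confined to the closed triangle and excluded from being $u$ or $p$, it is either $v$ or an interior vertex contradicting the minimality of $q$. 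But you do not make this argument.) Second, and more seriously, even the repaired step is powered entirely by the minimality of $q$: the iterate $b$ inherits neither sub-triangle emptiness nor distance-minimality, so there is no reason the blocking edge of $\overline{ub}$ emanates from $v$, and the decreasing angle $\angle(u,v,b)$ only rules out cycling; it does not restore the validity of the individual descent step. You flag this difficulty yourself but leave it unresolved. Third, your descent targets visibility from $u$ only; the promised ``symmetric argument'' for $v$ would produce a different vertex, and you give no mechanism for obtaining a single vertex visible from \emph{both} $u$ and $v$, which is exactly what the contradiction requires.

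The paper's proof avoids all three problems by descending on distance to $\overline{uv}$ rather than on an angle at $v$, and without privileging either endpoint of $\overline{uv}$: start from \emph{any} vertex $x$ in the triangle (no minimal choice needed); if $x$ is blocked from $u$ or from $v$, the blocking edge lies in the closed triangle and its endpoint nearest to $\overline{uv}$ is strictly nearer than $x$, because the crossing point lies in the relative interior of a segment joining $x$ to a point of $\overline{uv}$ and the distance function is affine along the blocking edge; replace $x$ by that endpoint and repeat. Finiteness of $V$ forces termination at a vertex of the triangle visible from both $u$ and $v$, which then eliminates $\overline{uv}$ and is closer than $p$, the desired contradiction. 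If you want to keep your structure, adopt this distance-based invariant: it makes the initial minimal choice of $q$ unnecessary and merges the two visibility requirements into a single termination condition.
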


\begin{proof}
Assume to the contrary that there exist a vertex $x \in V$ such that the triangle $\triangle(u,v,p)$ contains $x$. The vertex $x$ does not eliminate $\overline{uv}$, otherwise it will be a contradiction to $p$ being the closest vertex to $\overline{uv}$ that eliminates $\overline{uv}$. Thus, there exist an edge $e' \in I$ that blocks $x$ from $u$ or $v$, i.e. $e'$ intersects line segment $\overline{ux}$ or $\overline{vx}$. Refer to Fig.~\ref{fig:triangle_is_empty1}. Notice, that $e'$ does not intersect neither $\overline{up}$ nor $\overline{vp}$, otherwise $p$ is not visible to both $u$ and $v$, which contradicts to the fact that $p$ eliminates $\overline{uv}$. Therefore, both endpoints of $e'$ are inside $\triangle(u,v,p)$. Let us look at the endpoint of $e'$ that is closest to $\overline{uv}$ (let us call it $x'$). It is clear that $x'$ is closer to $\overline{uv}$ than $x$. Similarly to $x$, $x'$ must be blocked by an edge of $I$ from either $u$ or $v$. Since we have a finite number of vertices, there will be a vertex inside $\triangle(u,v,p)$ visible to both $u$ and $v$ and closer to $\overline{uv}$ than $p$. This is a contradiction.
\qed
\end{proof}

The above lemma implies that if there exists an edge $e^* \in CDT(I)$ that lies between $p$ and $\overline{uv}$, i.e. $e^*$ intersects the interior of $\triangle(u,v,p)$, then $e^*$ intersects both line segments $\overline{up}$ and $\overline{vp}$. Refer to Fig.~\ref{fig:triangle_is_empty2}. Notice, that $e^*$ cannot intersect $\overline{uv}$, since $e^*, \overline{uv} \in CDT(I)$.

\begin{lemma}
\label{lem:beta_alg_triangle_intersection}
Given a plane graph $I = (V, E)$ and $1 \leq \beta \leq 2$. Let  $p \in V$  be the closest vertex to the edge $\overline{uv} \in CDT(I)$ that eliminates $\overline{uv}$.
Let $e^* = \overline{qw}$ be an edge of $CDT(I)$ that intersects $\triangle(u,v,p)$. Then $p \in U_{q,w}(\beta)$. Ref. to Fig.~\ref{fig:triangle_is_empty2}.
\end {lemma}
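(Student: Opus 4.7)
My plan is to deduce $p\in U_{q,w}(\beta)$ from the hypothesis $p\in U_{u,v}(\beta)$ together with the positional information provided by Lemma~\ref{lem:beta_alg_empty_triangle} and the remark following it. First, I would rule out that $e^*$ is a constraint edge. By that remark, $e^*$ meets the interior of $\overline{up}$ at a point $a$ and the interior of $\overline{vp}$ at a point $b$. If $e^*\in E$, then the constraint $e^*$ would block the line of sight from $p$ to $u$ at $a$, contradicting the assumption that $p$ eliminates $\overline{uv}$ (which requires $p$ to be visible to both $u$ and $v$). Hence $e^*$ is locally Delaunay, and I may orient it so that $q,a,b,w$ occur in this order along the line of $e^*$; in particular $q$ lies in the open half-plane of line $\overline{up}$ not containing $v$, and $w$ in the open half-plane of line $\overline{vp}$ not containing $u$.

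Second, I would translate the goal into algebra. Unfolding $U_{q,w}(\beta)=D_1\cap D_2$, with $D_1=D(c_1,r)$, $D_2=D(c_2,r)$, $c_1=(1-\beta/2)q+(\beta/2)w$, $c_2=(1-\beta/2)w+(\beta/2)q$, and $r=(\beta/2)|qw|$, and expanding $|p-c_i|^2\le r^2$, I get that $p\in U_{q,w}(\beta)$ is equivalent to the two inequalities
\[
|p-q|^2\le\beta\,(p-q)\cdot(w-q) \quad\text{and}\quad |p-w|^2\le\beta\,(p-w)\cdot(q-w).
\]
The hypothesis $p\in U_{u,v}(\beta)$ yields the same pair with $(q,w)$ replaced by $(u,v)$. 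I would then write $a=u+t(p-u)$ and $b=v+s(p-v)$ for some $t,s\in(0,1)$, and $q=a+\tau(a-b)$, $w=b+\sigma(b-a)$ for some $\tau,\sigma>0$ (since $q$ is past $a$ and $w$ past $b$ on $e^*$). Substituting these identities, each of the two $\overline{qw}$-inequalities becomes an inequality in $u,v,p$ and the non-negative scalars $t,s,\tau,\sigma$, which I would then derive from the $\overline{uv}$-inequalities.

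The main obstacle is the asymmetric case $t\ne s$, in which the chord $\overline{ab}$ is not parallel to $\overline{uv}$. In the symmetric case $t=s$ the factor $1-t$ cancels uniformly and one obtains $p\in U_{a,b}(\beta)$ precisely when $p\in U_{u,v}(\beta)$; since $\overline{ab}\subseteq\overline{qw}$, a short monotonicity check shows $U_{a,b}(\beta)\subseteq U_{q,w}(\beta)$ and completes the proof. When $t\ne s$, however, $p$ need not lie in $U_{a,b}(\beta)$ at all, so this intermediate reduction breaks and each of the two $\overline{qw}$-inequalities must be checked directly. The crucial geometric input is Lemma~\ref{lem:beta_alg_empty_triangle} (emptiness of $\triangle(u,v,p)$) together with the half-plane constraints on $q$ and $w$ established in the first step: these force the projections $(p-q)\cdot(w-q)$ and $(p-w)\cdot(q-w)$ to remain large enough relative to $|p-q|^2$ and $|p-w|^2$, and I expect the desired bounds to follow once the substitution is carried through carefully and combined with the $(u,v)$-disk conditions; each of the two inequalities is then handled independently by a symmetric argument anchored at $q$ and $w$ respectively.
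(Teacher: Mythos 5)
Your setup is sound as far as it goes: ruling out $e^*\in E$, the parametrization $a=u+t(p-u)$, $b=v+s(p-v)$, $q=a+\tau(a-b)$, $w=b+\sigma(b-a)$ with $t,s\in(0,1)$ and $\tau,\sigma>0$, the reduction of $p\in U_{q,w}(\beta)$ to the two inner-product inequalities, and the symmetric case $t=s$ via $U_{a,b}(\beta)\subseteq U_{q,w}(\beta)$ are all correct. But the asymmetric case is not a computation you have merely postponed; it cannot be completed from the facts you allow yourself. The inputs you name as "crucial" --- $p\in U_{u,v}(\beta)$, the crossing pattern of $e^*$ with $\overline{up}$ and $\overline{vp}$, and the emptiness of $\triangle(u,v,p)$ --- do \emph{not} imply $p\in U_{q,w}(\beta)$. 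Take $\beta=2$, $u=(0,0)$, $v=(1,0)$, $p=(0.5,0.8)$, so $|pu|=|pv|=\sqrt{0.89}<1$ and $p\in U_{u,v}(2)$; take $t=0.95$, $s=0.05$, $\tau=\sigma=0.001$, i.e.\ $a=(0.475,0.76)$, $b=(0.975,0.04)$, $q\approx(0.4745,0.7607)$, $w\approx(0.9755,0.0393)$. Every structural condition in your plan holds, yet $|pw|\approx 0.897>0.878\approx|qw|$, so $p\notin U_{q,w}(2)$. The root cause is that $|pb|>|ab|$ can occur when $t\neq s$, and extending the chord by arbitrarily small $\tau,\sigma$ does not repair it; so the two target inequalities are simply false in general, and no amount of careful substitution will derive them.

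What rescues the lemma --- and what your proposal never invokes --- is the remaining strength of the hypothesis that $p$ is the \emph{closest} vertex eliminating $\overline{uv}$, applied to the endpoints $q$ and $w$ themselves. In the configuration above, $q$ and $w$ both lie in $U_{u,v}(2)$, are closer to $\overline{uv}$ than $p$, and are visible to $u$ and $v$, so each of them eliminates $\overline{uv}$ and contradicts the minimality of $p$; the configuration therefore cannot arise under the lemma's hypotheses. This is precisely the observation the paper's proof inserts before its angle comparison $\angle(q,p,w)>\angle(u,p,v)$: an endpoint of $e^*$ that lies in $U_{u,v}(\beta)$ must be farther from $\overline{uv}$ than $p$ or blocked from $u$ or $v$ by an edge of $I$. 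Any correct proof along your algebraic lines must import this constraint on where $q$ and $w$ may sit relative to $U_{u,v}(\beta)$ and to $p$'s distance from $\overline{uv}$; without it the case $t\neq s$ is unprovable because it is untrue.
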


\begin{proof}
According to Lemma~\ref{lem:beta_alg_empty_triangle}, the edge $e^*$ intersects $\overline{up}$ and $\overline{vp}$. Refer to Fig.~\ref{fig:triangle_is_empty2}. Notice, that $e^* \notin I$, otherwise $p$ is not visible to $u$ and $v$. The endpoints of $e^*$ may or may not belong to $U_{u,v}(\beta)$. Notice, that if an endpoint of $e^*$ belongs to $U_{u,v}(\beta)$ then it must be either further away from $\overline{uv}$ than $p$ or it must be blocked by some edge of $I$ from being visible to $u$ or $v$.
Observe, that $\angle(q,p,w) > \angle(u,p,v)$. Thus, since $p \in U_{u,v}(\beta)$ then $p \in U_{q,w}(\beta)$.
\qed
\end{proof}

To solve our main problem we will use two geometric structures: \emph{elimination path} and \emph{elimination forest}, introduced by Jaromczyk and Kowaluk in~\cite{Jaromczyk:1987:NRN:41958.41983}. The elimination path for a vertex $p$ (starting from an adjacent triangle $\triangle(p,u,v)\in CDT(I)$) is an ordered list of edges, such that $p \in U_{e}(\beta)$ for each edge $e$ of this list. In the work~\cite{Jaromczyk:1987:NRN:41958.41983} an edge $e$ belongs to the elimination path induced by some point $p$ only if $e$ is eliminated by $p$. In our problem this is not the case. The point $p$ eliminates $e$ if and only if $p \in U_{e}(\beta)$ and $p$ is visible to both endpoints of $e$. We show how to adapt the original elimination forest to our problem later in this section. 

The elimination path is defined by the following construction (refer to Algorithm~\ref{alg_elimPATH}). Notice, that the construction of the elimination path introduced by Jaromczyk and Kowaluk in~\cite{Jaromczyk:1987:NRN:41958.41983} does not contain step $4$.

\begin{algorithm}[H]
\caption{Construction of the elimination path}\label{alg_elimPATH}
\KwIn{vertex $p \in V$, $\triangle(p,u,v)\in CDT(I)$ and $1 \leq \beta \leq 2$}
\KwOut{$path(p, \overline{uv})$ - ``elimination path of $p$ towards $\overline{uv}$"}
\BlankLine
$e := \overline{uv}$, $path(p, \overline{uv}) := \emptyset$\;
\While{$p \in U_{e}(\beta)$}{
	$path(p, \overline{uv}) := append (path(p, \overline{uv}), e)$\;
	\lIf{$e$ is \textbf{not} locally Delaunay}{terminate \textbf{while} loop}
	Take triangle $\triangle t \in CDT(I)$ adjacent to $e$, such that $\triangle t$ and $p$ are separated by the line passing through $e$\;
	\lIf{$p \in U_{e'}(\beta)$ such that $e' \in \triangle t$}{set $e := e'$;}
}
\end{algorithm}

Jaromczyk and Kowaluk show that a vertex cannot belong to the $\beta$-neighbour- hood $U(\beta)$ of more than two edges for a particular triangle. Thus elimination paths do not split. Moreover, they also show that if two elimination paths have a common edge $e$ and they both reached $e$ via the same triangle, then starting from this edge one of the two paths is completely included in another one. This property is very important - it guarantees that the elimination forest can be constructed in linear time. Refer to~\cite{Jaromczyk:1987:NRN:41958.41983} and~\cite{Jaromczyk:1989} for further details. 

Since we are dealing with $CDT$, the elimination paths defined via the original construction may split at non-locally Delaunay edges. To overcome this problem we terminate the propagation of the elimination path after a non-locally Delaunay edge is encountered and added to the path. Thus, the elimination forest for our problem can also be constructed in linear time. It is shown in Lemma $2.3$ in~\cite{Jaromczyk:1987:NRN:41958.41983} that if two points eliminate a common edge of a triangle in $DT$ (such that both points are external to this triangle) then the two points can eliminate at most one of the remaining edges of this triangle. Similarly, we can show that if two points of $V$ eliminate a common locally Delaunay edge $e$ of an external triangle in $CDT$ then they can eliminate at most one of the remaining edges of the triangle. It is due to the fact that there exists a circle that contains the endpoints of $e$ such that if any vertex $v$ of $V$ is in the interior of the circle then it cannot be ``seen" from at least one of the endpoints of $e$. It means that the point $v$ does not eliminate $e$,-- the elimination path of $v$ terminated at non-locally Delaunay edge that obstructed visibility between $v$ and one of the endpoints of~$e$.

Lemmas~\ref{lem:beta_alg_empty_triangle} and~\ref{lem:beta_alg_triangle_intersection} show that no important information will be lost as a result of ``shorter" elimination paths. Every non-locally Delaunay edge of $CDT(I)$ is a constraint and thus belongs to $I$. Edges of $I$ obstruct visibility. Let  $p \in V$  be the closest vertex to the edge $\overline{uv} \in CDT(I)$ that eliminates $\overline{uv}$. Assume to the contrary that $\overline{uv}$ is not on the elimination path from $p$ because the path terminated at non-locally Delaunay edge $e^*$ before the path could reach $\overline{uv}$. By Lemma~\ref{lem:beta_alg_empty_triangle} the edge $e^*$ intersects both line segments $\overline{pu}$ and $\overline{pv}$. Refer to Fig.~\ref{fig:triangle_is_empty2}. Since $e^* \in I$, neither $u$ nor $v$ are visible to the point $p$. This contradicts the fact that $p$ eliminates $\overline{uv}$. Lemma~\ref{lem:beta_alg_triangle_elimination_path} further shows that if some edge $e$ of $I$ must be a constraint in $CG_{\beta}(I)$ then it will belong to at least one elimination path, and in particular, to the path induced by the closest point to $e$ that eliminates $e$.

\begin{lemma}
\label{lem:beta_alg_triangle_elimination_path}
Given a plane graph $I = (V, E)$ and $1 \leq \beta \leq 2$. Let  $p \in V$  be the closest vertex to the edge $\overline{uv} \in CDT(I)$ that eliminates $\overline{uv}$.
Then $\overline{uv}$ belongs to the elimination path induced by  $p$.
\end{lemma}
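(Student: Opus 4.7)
The plan is to exhibit an explicit elimination path induced by $p$ that contains $\overline{uv}$, by tracing a straight sight-line from $p$ toward an interior point of $\overline{uv}$ and showing that Algorithm~\ref{alg_elimPATH} walks along exactly the sequence of $CDT(I)$ edges crossed by this sight-line. I would first fix a point $m$ in the relative interior of $\overline{uv}$ such that the open segment $\overline{pm}$ meets no vertex of $V$ (only finitely many positions are excluded, so $m$ exists). By Lemma~\ref{lem:beta_alg_empty_triangle} the interior of $\triangle(u,v,p)$ is vertex-free, and $\overline{pm}\subset\triangle(u,v,p)$ by convexity, hence $\overline{pm}$ crosses a well-defined sequence $T_0,T_1,\ldots,T_k$ of $CDT(I)$ triangles with $p$ a vertex of $T_0$ and $\overline{uv}$ an edge of $T_k$ containing $m$. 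Let $e_i$ be the $CDT(I)$ edge separating $T_i$ and $T_{i+1}$ for $0\le i<k$, and set $e_k:=\overline{uv}$. Since each $e_i$ intersects $\triangle(u,v,p)$, Lemma~\ref{lem:beta_alg_triangle_intersection} immediately yields $p\in U_{e_i}(\beta)$ for every $i\in\{0,\ldots,k\}$.

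I would then run Algorithm~\ref{alg_elimPATH} on the initial triangle $T_0=\triangle(p,a,b)$ with $\overline{ab}=e_0$, and prove by induction on $i$ that at the top of the $(i+1)$-st iteration the algorithm holds $e=e_i$ and then appends $e_i$ to the path. Two inductive obligations must be discharged: (a) for $i<k$, the edge $e_i$ is locally Delaunay, so the loop does not terminate at step~$4$; and (b) in the triangle $T_{i+1}$ reached across $e_i$, the algorithm picks $e_{i+1}$ as the new value of $e$.

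For (a), suppose for contradiction that $e_i$ is not locally Delaunay for some $i<k$; then $e_i\in E$ is a constraint. Because the interior of $\triangle(u,v,p)$ contains no vertex of $V$ and $p$ is a vertex of $T_0$ only (the sight-line leaves $T_0$ through the edge opposite $p$), a short case analysis on whether an endpoint of $e_i$ coincides with $u$ or $v$ shows that $e_i$ must cross at least one of the open segments $\overline{pu},\overline{pv}$. This constraint then obstructs visibility between $p$ and $u$ or $v$, contradicting the hypothesis that $p$ eliminates $\overline{uv}$. For (b), I would invoke the Jaromczyk--Kowaluk property already cited in the paper: a vertex can belong to the $\beta$-neighbourhood of at most two edges of a given $CDT$ triangle; since $p\in U_{e_i}(\beta)$ and $p\in U_{e_{i+1}}(\beta)$, the only edge of $T_{i+1}\setminus\{e_i\}$ into which Algorithm~\ref{alg_elimPATH} can step is precisely $e_{i+1}$. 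After $k$ inductive steps the algorithm holds $e=e_k=\overline{uv}$ and, since $p\in U_{\overline{uv}}(\beta)$, appends $\overline{uv}$ to the path.

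The main point demanding care is step (a), where I need to rule out that a non-locally Delaunay $e_i$ blocks the walk before it reaches $\overline{uv}$; together with nondegeneracy of the sight-line, this is handled via the general-position assumption on $V$ and the perturbation choice of $m$.
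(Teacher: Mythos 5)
Your proof follows essentially the same route as the paper's: both rest on Lemma~\ref{lem:beta_alg_empty_triangle} and Lemma~\ref{lem:beta_alg_triangle_intersection} to show that every $CDT(I)$ edge lying between $p$ and $\overline{uv}$ is locally Delaunay and has $p$ in its $\beta$-neighbourhood, and then conclude that the elimination path started at $p$ reaches $\overline{uv}$. If anything, your version is more careful than the paper's, which simply asserts that ``every edge of $E'$ belongs to the elimination path'': your sight-line ordering together with the Jaromczyk--Kowaluk bound of at most two edges per triangle supplies the justification, left implicit in the paper, that Algorithm~\ref{alg_elimPATH} actually steps through these edges in sequence.
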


\begin{proof}
Let $E'$ be the set of all the edges of $CDT(I)$ that lie between $p$ and $\overline{uv}$. By Lemma~\ref{lem:beta_alg_empty_triangle} each of those edges intersects both line segments $\overline{pu}$ and $\overline{pv}$. We proved in Lemma~\ref{lem:beta_alg_triangle_intersection} that $p \in U_{e^*}(\beta)$ for every edge $e^* \in E'$. Since $p$ eliminates $\overline{uv}$, none of the edges of $E'$ belong to $I$ and thus every edge of $E'$ is locally Delaunay. Let $e'$ be the closest edge to $p$ among all the edges of $E'$. It follows that every edge of $E'$ belongs to the elimination path induced by $p$ towards edge $e'$: $path(p, e')$. Therefore, the edge $\overline{uv}$ is also appended to this elimination path at a certain point.
\qed
\end{proof}

Each elimination path starts with a special node (we call it a leaf) that carries information about the vertex that induced the current elimination path. A node that corresponds to the last edge of a particular elimination path also carries information about the vertex that started this path. The elimination forest is build from bottom (leaves) to top (roots).

The elimination forest (let us call it $ElF$) gives us a lot of information, but we still do not know how to deal with visibility. The elimination path induced by point $p$ can contain locally Delaunay edges of $I$ that may obstruct visibility between $p$ and other edges that are further on the path. We want to identify edges that not only belong to the elimination path of some vertex $p$ but whose both endpoints are also visible to $p$. Observe, that visibility can only be obstructed by edges of $I$. Let us contract all the nodes of the $ElF$ that correspond to edges not in $I$. If a particular path is completely contracted, we delete its corresponding leaf as well. Now the $ElF$ contains only nodes of edges that belong to $I$ together with leaves, that identify elimination paths, that originally had at least one edge of $I$. The correctness of our approach is supported by the following lemma.

\begin{lemma}
\label{lem:beta_alg_triangle_elimination_path_contracted}
Given a plane graph $I = (V, E)$, $1 \leq \beta \leq 2$ and a contracted $ElF$ of $CDT(I)$. There exists a vertex of $V$ that eliminates $\overline{uv} \in I$ \textbf{if and only if} the node $n_{\overline{uv}}$ of the contracted $ElF$ has a leaf attached to it.
\end{lemma}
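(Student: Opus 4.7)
The plan is to establish the two implications separately, in both cases leaning on Lemmas~\ref{lem:beta_alg_empty_triangle} and~\ref{lem:beta_alg_triangle_elimination_path} together with the ``triangle-chase'' argument spelled out in the paragraph preceding Lemma~\ref{lem:beta_alg_triangle_elimination_path}: namely, that any edge of $I$ appearing between an eliminating vertex $p$ and its target on an elimination path must intersect both segments from $p$ to the endpoints of the target, thereby destroying visibility.

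For the forward direction I would begin by assuming that some $p \in V$ eliminates $\overline{uv}$ and letting $p^*$ be the closest such vertex to $\overline{uv}$. Lemma~\ref{lem:beta_alg_triangle_elimination_path} immediately places $\overline{uv}$ on the elimination path induced by $p^*$ in the original $ElF$. I would then argue that $\overline{uv}$ is the \emph{first} edge of $I$ encountered along that path: if some earlier $e^* \in I$ appeared before $\overline{uv}$, Lemma~\ref{lem:beta_alg_empty_triangle} (applied to the fact that $e^* \in CDT(I)$ lies between $p^*$ and $\overline{uv}$) would force $e^*$ to cross both $\overline{p^*u}$ and $\overline{p^*v}$, contradicting visibility from $p^*$ to $u$ and $v$. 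Consequently, in the contracted $ElF$ the leaf $L_{p^*}$ is attached directly to $n_{\overline{uv}}$.

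For the backward direction, suppose $L_p$ is a leaf attached to $n_{\overline{uv}}$ in the contracted $ElF$. The membership $p \in U_{u,v}(\beta)$ is automatic because $\overline{uv}$ lies on $p$'s elimination path. By the way contraction is defined, no edge of $I$ strictly precedes $\overline{uv}$ between $L_p$ and $n_{\overline{uv}}$ on that path (otherwise the leaf would have been re-attached to the node of that earlier $I$-edge instead). To finish, I would verify visibility of $u$ and $v$ from $p$: assuming some $e' \in I$ intersected, say, $\overline{pu}$, a triangle-chase mirroring the proof of Lemma~\ref{lem:beta_alg_empty_triangle} pushes $e'$ onto the portion of the elimination path strictly between $L_p$ and $n_{\overline{uv}}$, contradicting what was just observed. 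Hence $p$ eliminates $\overline{uv}$.

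The main obstacle is this last visibility step in the backward direction. Ruling out the scenario in which a blocking $I$-edge slips ``around'' the funnel of $CDT(I)$-triangles traced by the elimination path (rather than being crossed by the path) is the only nontrivial geometric content: the forward direction is a quick appeal to Lemma~\ref{lem:beta_alg_triangle_elimination_path}, and the $U_{u,v}(\beta)$ membership in the backward direction is built in, but this visibility check requires replaying the empty-triangle argument of Lemma~\ref{lem:beta_alg_empty_triangle} in reverse---from the purported blocker back toward $\overline{uv}$---to certify that it actually sits on the path.
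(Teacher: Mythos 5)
Your forward direction matches the paper's: take the closest eliminating vertex $p$, invoke Lemma~\ref{lem:beta_alg_triangle_elimination_path} to place $\overline{uv}$ on $p$'s elimination path, and use the fact that any edge of $I$ lying on the path between $p$ and $\overline{uv}$ would cross both $\overline{pu}$ and $\overline{pv}$ (the consequence drawn from Lemma~\ref{lem:beta_alg_empty_triangle}) to conclude that the leaf survives contraction and attaches directly to $n_{\overline{uv}}$. That half is fine.

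The backward direction has a genuine gap, and you have correctly located it but not closed it. You start from an \emph{arbitrary} leaf $L_p$ attached to $n_{\overline{uv}}$ and must then exclude an $I$-edge that blocks $\overline{pu}$ or $\overline{pv}$ without lying on the elimination path; your proposed fix is a ``triangle-chase mirroring Lemma~\ref{lem:beta_alg_empty_triangle} in reverse,'' but that chase only works once you already know that the path edges between $p$ and $\overline{uv}$ cross both $\overline{pu}$ and $\overline{pv}$ and hence tile the funnel $\triangle(p,u,v)$ --- which is precisely what can fail for a generic leaf, since the path's triangles may have vertices inside $\triangle(p,u,v)$, the path may wander outside the funnel, and a blocking $I$-edge can then slip around it. The paper's missing ingredient is an extremal choice: among all leaves attached to $n_{\overline{uv}}$, take the one whose vertex $p$ is \emph{closest} to $\overline{uv}$. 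Then no triangle of $path_p$ has a vertex inside $\triangle(p,u,v)$ (such a vertex would induce an elimination path overlaying $path_p$ and yield a strictly closer leaf on $n_{\overline{uv}}$, contradicting minimality), so every path edge between $p$ and $\overline{uv}$ crosses both $\overline{pu}$ and $\overline{pv}$; these are the only $CDT(I)$ edges meeting the interior of the funnel, none of them is in $I$ (otherwise the leaf would not be attached to $n_{\overline{uv}}$ after contraction), and since every edge of $I$ is an edge of $CDT(I)$, nothing can block $p$ from $u$ or $v$. Without this closest-leaf selection --- or some substitute for it --- the visibility step you flag as the main obstacle does not go through, so the argument as written is incomplete.
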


\begin{proof}
We start by proving the first direction of the lemma. Let  $p \in V$  be the closest vertex to the edge $\overline{uv} \in CDT(I)$ that eliminates $\overline{uv}$. By Lemma~\ref{lem:beta_alg_triangle_elimination_path}, $\overline{uv}$ belongs to the elimination path induced by $p$. If $\overline{uv}$ is the first edge on this elimination path or every edge on this path between $p$ and $\overline{uv}$ is not in $I$, then $n_{\overline{uv}}$ has a leaf (that represents this path) in contracted $ElF$. Assume to the contrary that $n_{\overline{uv}}$ has no immediate leaf, meaning that elimination path from $p$ to $\overline{uv}$ contains an edge of $I$. By Lemma~\ref{lem:beta_alg_triangle_intersection} this edge intersects both $\overline{pu}$ and $\overline{pv}$, which contradicts to $p$ being visible to $u$ and $v$.

Let us prove the opposite direction. Let $n_{\overline{uv}}$ be a node of contracted $ElF$ that has a leaf attached to it. We have to prove that there exist a vertex that eliminates $\overline{uv}$. Notice also, that if $ElF$ does not contain node $n_{\overline{uv}}$ then $\overline{uv}$ does not belong to any elimination path. Among all the possible leaves attached to  $n_{\overline{uv}}$ let us choose the one that corresponds to the vertex $p$ closest to $\overline{uv}$. According to the construction of $ElF$, elimination path induced by $p$ (let us call it $path_p$) contains $\overline{uv}$. Notice that elimination path represents a list of triangles of $CDT(I)$, such that each pair of consecutive edges of the path belongs to the same triangle. If $\triangle(p,u,v) \in CDT(I)$ then $p$ eliminates $\overline{uv}$ and we are done. Assume that $\triangle(p,u,v) \notin CDT(I)$. All the vertices of every triangle of $CDT(I)$ on $path_p$ are not inside $\triangle(p,u,v)$. Otherwise, the path induced by such a vertex will overlay with $path_p$, which contradicts to $p$ being the closest to $\overline{uv}$ among other leaves attached to $n_{\overline{uv}}$. Thus all the edges between $p$ and $\overline{uv}$ on $path_p$ intersect both $\overline{pu}$ and $\overline{pv}$. Moreover, none of those edges belong to $I$, otherwise, in contracted $ElF$ $p$ will not be a leaf of $n_{\overline{uv}}$. It follows that in $CDT(I)$ there does not exist an edge of $I$ that intersects $\overline{pu}$ or $\overline{pv}$. Therefore, $p$ is visible to both $u$ and $v$ and thus $p$ eliminates $\overline{uv}$.
\qed
\end{proof}

We are ready to present an algorithm that finds the minimum set $S \subseteq E$ of edges such that $I \subseteq CG_\beta(V, S)$ for constrained $\beta$-skeletons ($1~\leq~\beta~\leq~2$): 

\begin{algorithm}[H]
\caption{Minimum set of constraints for constrained $\beta$-skeletons}\label{alg_BETA}
\KwIn{plane graph $I = (V, E)$ and $1 \leq \beta \leq 2$}
\KwOut{minimum set $S \subseteq E$ of constraints such that $I \subseteq CG_\beta(V, S)$}
\BlankLine
Construct $CDT(I)$\;
Initialize $S=\emptyset$\;
Construct Elimination Forest ($ElF$) of $CDT(I)$\;
\ForEach{$e \notin E$}{
	contract the node that corresponds to $e$ in $ElF$\;
	\If{a particular path is about to be contracted in full}{delete its corresponding leaf from $ElF$}
}
\ForEach{node $n_e$ (that corresponds to edge $e$) of Contracted $ElF$}{
	\If{$n_e$ has an immediate leaf attached to it}{set $S \leftarrow S \cup \{e \}$;}
}
\end{algorithm}

Let us discuss the correctness of Algorithm~\ref{alg_BETA}. Let $S$ be the output of the algorithm on the input plane graph $I = (V, E)$. Notice that the following is true: $I \subseteq CG_\beta(V, S)$. Every edge of $E$ that belongs to $S$ also belongs to $CG_\beta(V, S)$ by definition. According to the algorithm, every edge of $E\setminus S$ does not have a leaf attached to a corresponding node in contracted $ElF$. By Lemma~\ref{lem:beta_alg_triangle_elimination_path_contracted} none of those edges has a vertex that eliminates it.

\begin{lemma}
\label{lem:beta_alg_minimality}
Let $S$ be the output of Algorithm~\ref{alg_BETA} on the input plane graph $I = (V, E)$. The set $S$ is minimum.
\end{lemma}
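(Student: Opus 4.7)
The plan is to show the stronger statement that $S \subseteq S^*$ for every subset $S^* \subseteq E$ satisfying $I \subseteq CG_\beta(V,S^*)$; minimality in cardinality is then immediate. The key leverage is that every edge placed into $S$ by Algorithm~\ref{alg_BETA} is forced in by a concrete ``witness'' vertex whose elimination power does not diminish when we shrink the obstacle set from $I$ to $S^*$.

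First I would fix an arbitrary $S^* \subseteq E$ with $I \subseteq CG_\beta(V,S^*)$, and pick an arbitrary $e = \overline{uv} \in S$. By the construction in Algorithm~\ref{alg_BETA}, the node $n_e$ in the contracted elimination forest $ElF$ has a leaf attached to it. Applying Lemma~\ref{lem:beta_alg_triangle_elimination_path_contracted}, this guarantees the existence of a vertex $p \in V$ that \emph{eliminates} $\overline{uv}$ with respect to the obstacle set $I$; that is, $p \in U_{u,v}(\beta)$ and $p$ is visible to both $u$ and $v$ in the presence of every edge of $I$ as an obstacle.

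Next I would transfer this witness from $I$ to $S^*$. The region $U_{u,v}(\beta)$ depends only on $u$, $v$ and $\beta$, so $p \in U_{u,v}(\beta)$ remains true. For the visibility part, since $S^* \subseteq E$, the obstacle set $S^*$ is a subset of $I$, so any segment that avoids all edges of $I$ also avoids all edges of $S^*$. Consequently $p$ is still visible to both $u$ and $v$ with respect to $S^*$, and therefore $p$ eliminates $\overline{uv}$ with respect to $S^*$ as well. Now, by Def.~\ref{def:Constrained_B-Skeleton(V)}, an edge of $E$ belongs to $CG_\beta(V,S^*)$ only if it is either in $S^*$ or is not eliminated by any vertex visible to both endpoints through the obstacle set $S^*$. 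Since $e \in E \subseteq CG_\beta(V,S^*)$ by assumption, and $p$ witnesses elimination of $e$ with respect to $S^*$, the only remaining possibility is $e \in S^*$. Thus every edge of $S$ lies in $S^*$, giving $|S| \leq |S^*|$ and establishing that $S$ is of minimum cardinality.

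The main obstacle I expect is the visibility transfer argument, which is a conceptual rather than computational hurdle: one must verify that the witness $p$ produced by the contracted elimination forest is legitimately visible to $u$ and $v$ through all of $I$ (not just through the edges currently in the forest), and conversely that shrinking obstacles from $I$ to $S^*$ preserves visibility. Both facts rely on the careful bookkeeping already done in Lemma~\ref{lem:beta_alg_triangle_elimination_path_contracted} together with the trivial monotonicity of visibility in the set of obstacles. Once this is in hand, the containment $S \subseteq S^*$ and hence minimality of $S$ follow cleanly.
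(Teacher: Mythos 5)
Your proposal is correct and follows essentially the same route as the paper: extract, via Lemma~\ref{lem:beta_alg_triangle_elimination_path_contracted}, a witness vertex eliminating each edge of $S$, and conclude that any competing constraint set must also contain that edge. The only difference is cosmetic --- you prove the containment $S \subseteq S^*$ directly and spell out the monotonicity of visibility under shrinking the obstacle set from $E$ to $S^*$, a step the paper's contradiction argument leaves implicit.
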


\begin{proof}
Assume to the contrary that there is another set $S'\subseteq E$, such that $I \subseteq CG_\beta(V, S')$ and $|S| > |S'|$. Let $e'$ be an edge of $S \setminus S'$. Notice, that $e' \in I$. Algorithm~\ref{alg_BETA} added $e'$ to $S$, thus there exists a leaf in the contracted $ElF$ attached to $n_{e'}$. By Lemma~\ref{lem:beta_alg_triangle_elimination_path_contracted} there exists a vertex of $V$ that eliminates $e'$. Thus $e' \notin CG_\beta(V, S')$. This contradicts to $I \subseteq CG_\beta(V, S')$.
\qed
\end{proof}

The running time of Algorithm~\ref{alg_BETA} depends on the complexity of the first step. Steps $2$ -- $10$ can be performed in $O(n)$ time. In the worst case the construction of $CDT(I)$ can take $O(n \log n)$ time. But for different types of input graph $I$ this time can be reduced. If $I$ is a tree or a polygon, then $CDT(I)$ can be constructed in $O(n)$ time. If $I$ is a triangulation,  then $I = CDT(I)$ and thus the first step is accomplished in $O(1)$ time.

\begin{theorem}
Given a plane graph $I = (V, E)$, Algorithm~\ref{alg_BETA} constructs the minimum set $S \subseteq E$ of constraints such that $I \subseteq CG_\beta(V, S)$. The running time of Algorithm~\ref{alg_BETA} is $O(n \log n)$, where $n = |V|$.
\end{theorem}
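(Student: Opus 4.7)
The plan is to assemble the correctness statement from the two halves already prepared in the preceding development: containment $I \subseteq CG_\beta(V,S)$ and minimality of $|S|$. For containment, I would argue that every edge $e \in S$ trivially belongs to $CG_\beta(V,S)$ by Definition~\ref{def:Constrained_B-Skeleton(V)}. For the remaining edges $e \in E \setminus S$, Algorithm~\ref{alg_BETA} only omits $e$ from $S$ when the node $n_e$ of the contracted elimination forest has no leaf attached to it; by Lemma~\ref{lem:beta_alg_triangle_elimination_path_contracted} this means that no vertex of $V$ eliminates $e$, and so $e$ satisfies the locally $\beta$-skeleton condition with respect to the constraint set $S$, placing it in $CG_\beta(V,S)$. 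Minimality then follows directly from Lemma~\ref{lem:beta_alg_minimality}.

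For the running time, I would walk through the steps of Algorithm~\ref{alg_BETA} individually. Step~1 constructs $CDT(I)$, which takes $O(n \log n)$ in the worst case. Step~3 builds the elimination forest; as discussed above, the termination rule at non-locally-Delaunay edges preserves the non-splitting property used by Jaromczyk and Kowaluk~\cite{Jaromczyk:1987:NRN:41958.41983,Jaromczyk:1989}, so their linear-time static Union-Find based construction still applies, giving $O(n)$ since $|CDT(I)| = O(n)$. The contraction loop of steps~4--7 processes each edge of $CDT(I) \setminus E$ once in $O(1)$ amortized time using standard forest-contraction bookkeeping, and the final scan in steps~8--10 visits each surviving node once, both contributing $O(n)$. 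Summing, the total bound is $O(n \log n)$, dominated by the $CDT$ construction.

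I expect the main obstacle not to lie in the arithmetic of the time bound but in justifying that the adapted elimination forest -- which differs from the original by stopping at non-locally-Delaunay edges and by subsequently contracting nodes corresponding to edges outside $E$ -- can still be built within the same asymptotic budget as in~\cite{Jaromczyk:1987:NRN:41958.41983}. Since the earlier discussion has already verified that elimination paths do not split even in the constrained setting and that the contraction is a single post-processing pass over the forest, I would simply cite that argument rather than re-derive it, and conclude with the stated $O(n \log n)$ bound.
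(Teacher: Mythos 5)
Your proposal matches the paper's own argument: the paper likewise establishes containment by noting that edges of $S$ are in $CG_\beta(V,S)$ by definition while edges of $E\setminus S$ have no leaf in the contracted $ElF$ and hence, by Lemma~\ref{lem:beta_alg_triangle_elimination_path_contracted}, are eliminated by no vertex; minimality is delegated to Lemma~\ref{lem:beta_alg_minimality}; and the running time is dominated by the $O(n\log n)$ construction of $CDT(I)$, with the elimination-forest steps taking $O(n)$ via the non-splitting property inherited from Jaromczyk and Kowaluk. No substantive differences.
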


\section{Hierarchy}
\label{sec:Hierarchy}
Proximity graphs form a nested hierarchy, a version of which was established~in~\cite{DBLP:journals/pr/Toussaint80}:

\begin{wrapfigure}{r}{0.4\textwidth}
\centering
\includegraphics[width=0.39\textwidth]{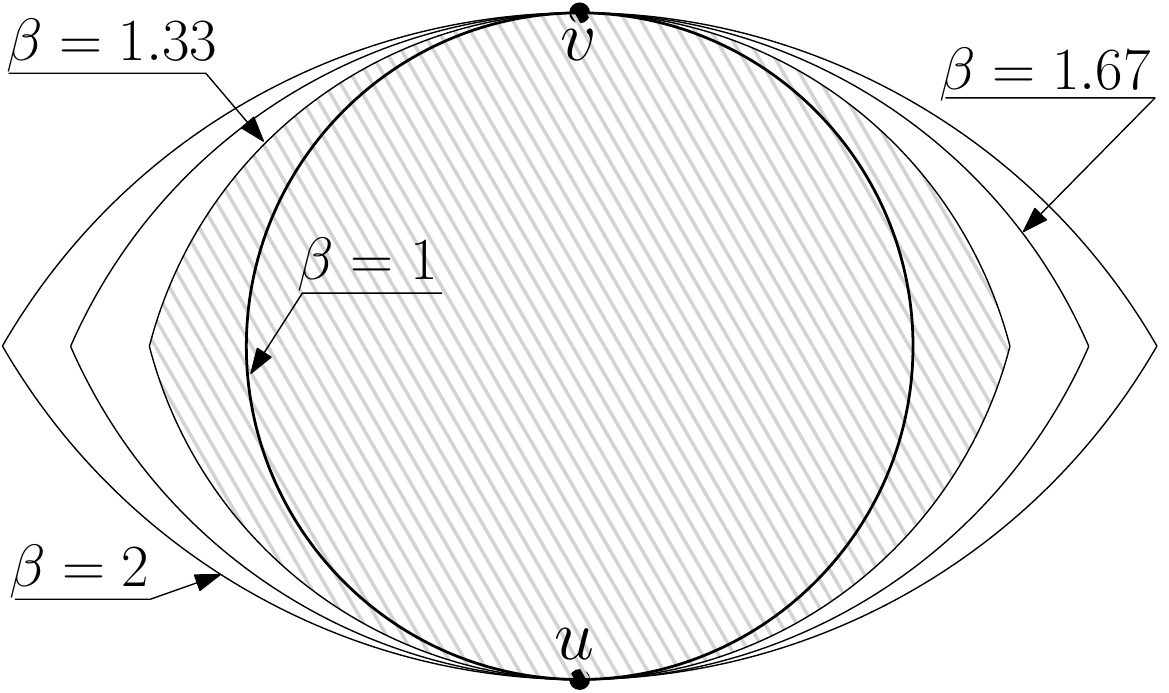}
\caption{Lune-based neighbourhoods for $1 \leq \beta \leq 2$, where $U_{u,v}(\beta = 1.33)$ is dashed.}
\vspace{-35pt}
\label{fig:beta_eye}
\end{wrapfigure}

\begin{theorem}[Hierarchy]
\label{theo:Toussaint_hierarchy}
In any $L_p$ metric, for a fixed set $V$ of points and $1 \leq \beta \leq 2$, the following is true: $MST \subseteq RNG \subseteq G_\beta \subseteq GG  \subseteq DT$.
\end{theorem}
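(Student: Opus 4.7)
The plan is to prove the four inclusions separately. The three middle inclusions all stem from a single nesting property of the neighbourhoods $U_{u,v}(\beta)$, while the two end-point inclusions need classical standalone arguments.

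First I would verify that $U_{u,v}(\beta) \subseteq U_{u,v}(\beta')$ whenever $1 \le \beta \le \beta' \le 2$. This follows directly from Def.~\ref{def:B-Skeleton(V)}: the two disks defining $U_{u,v}(\beta)$ have radii $\frac{\beta}{2}\,\mathrm{dist}(u,v)$ (growing in $\beta$) and centres moving away from the midpoint of $\overline{uv}$ toward $v$ and $u$ respectively as $\beta$ grows; a short calculation shows that the intersection enlarges monotonically. In particular $U_{u,v}(1)$ is the Gabriel disk and $U_{u,v}(2) = L_{u,v}$. Given this nesting, the inclusions $RNG(V) \subseteq G_\beta(V) \subseteq GG(V)$ all follow by contrapositive applied to the emptiness condition defining each graph: any point of $V$ witnessing that $\overline{uv} \notin GG(V)$ also lies in the larger regions $U_{u,v}(\beta)$ and $L_{u,v}$, so it witnesses $\overline{uv} \notin G_\beta(V)$ and $\overline{uv} \notin RNG(V)$ as well.

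For $MST \subseteq RNG$ I would use the standard cycle/exchange argument. Suppose $\overline{uv}$ is an $MST$ edge that is not in $RNG(V)$; then some $p \in V$ satisfies $\mathrm{dist}(u,p) < \mathrm{dist}(u,v)$ and $\mathrm{dist}(v,p) < \mathrm{dist}(u,v)$. Removing $\overline{uv}$ from the $MST$ splits it into two components, and $p$ belongs to exactly one of them, say the one containing $u$. Then the edge $\overline{vp}$ reconnects the two components at strictly smaller cost than $\overline{uv}$, contradicting the minimality of the $MST$. For $GG \subseteq DT$, observe that if $\overline{uv} \in GG(V)$ then the open disk with $\overline{uv}$ as diameter contains no other point of $V$; this disk is a circle through $u$ and $v$ with empty interior, so $\overline{uv}$ satisfies the empty-circumcircle characterization of a Delaunay edge and therefore lies in $DT(V)$.

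The main obstacle I anticipate is handling the \emph{general} $L_p$ metric cleanly: the regions $U_{u,v}(\beta)$ and the ``disks'' used in the $GG \subseteq DT$ step are defined via $L_p$-balls rather than Euclidean discs, so both the monotonicity step and the empty-circle characterization of Delaunay must be restated for the chosen metric. Once the $L_p$ analogues of these two geometric facts are in hand, the logical scaffolding above transfers unchanged, since the $MST$ exchange argument depends only on the triangle-inequality-free comparison of pairwise distances.
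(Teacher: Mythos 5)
Your proof is correct. Note that the paper does not actually prove Theorem~\ref{theo:Toussaint_hierarchy} at all --- it is quoted as a known result of Toussaint~\cite{DBLP:journals/pr/Toussaint80} --- but your three ingredients (the nesting $U_{u,v}(\beta)\subseteq U_{u,v}(\beta')$ for $\beta\le\beta'$, the cut/exchange argument for $MST\subseteq RNG$, and the empty-circle witness for $GG\subseteq DT$) are precisely the arguments the paper deploys to establish the constrained analogue in Lemma~\ref{lem:Hierarchy}, so your argument is the unconstrained specialization of the approach the paper itself uses.
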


We show that proximity graphs preserve the above hierarchy in the constraint setting (refer to Lemma~\ref{lem:Hierarchy}). We also show that the minimum set of constraints required to reconstruct a given plane graph (as a part of each of those neighbourhood graphs) form an inverse hierarchy (refer to Lemma~\ref{lem:constraint_travel_main}).

\begin{lemma}
\label{lem:Hierarchy}
Given a plane forest $F = (V, E)$ and $1 \leq \beta \leq 2$,
$CMST(F) \subseteq CRNG(F) \subseteq CG_\beta(F) \subseteq CGG(F) \subseteq CDT(F)$. 
Given a plane graph $I = (V, E)$ and $1 \leq \beta \leq 2$, $CRNG(I) \subseteq CG_\beta(I) \subseteq CGG(I) \subseteq CDT(I)$.
\end{lemma}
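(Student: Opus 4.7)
The plan is to break the chain into four separate inclusions and prove each by showing that non-constraint edges surviving in the smaller graph automatically survive in the larger one; constraint edges belong to every graph in the chain by definition, so require no argument.

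For the two middle inclusions $CRNG \subseteq CG_\beta \subseteq CGG$, I would first establish the purely geometric fact that the $\beta$-neighborhood $U_{u,v}(\beta)$ is monotone in $\beta$: for $1 \le \beta_1 \le \beta_2 \le 2$, $U_{u,v}(\beta_1) \subseteq U_{u,v}(\beta_2)$. This follows directly from the definition, since the two bounding disks share $u$ and $v$ on their boundaries, and as $\beta$ grows, their centers slide further apart and their radii grow, enlarging the intersection. Given this, if a non-constraint edge $\overline{uv}$ lies in $CG_{\beta_2}(I)$, then $U_{u,v}(\beta_2)$ contains no vertex of $V$ visible from both $u$ and $v$, and hence the same holds for the smaller region $U_{u,v}(\beta_1)$, giving $\overline{uv} \in CG_{\beta_1}(I)$. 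Taking $\beta_1 = 1$, $\beta_2 = \beta$, and then $\beta_1 = \beta$, $\beta_2 = 2$, and using $CG_1 = CGG$ and $CG_2 = CRNG$, yields both middle inclusions. The argument is agnostic to whether the input is a forest or a general plane graph, so it covers both halves of the lemma.

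For $CGG(I) \subseteq CDT(I)$, I would invoke the Lee--Lin equivalent formulation of $CDT(I)$ quoted after Def.~\ref{def:CDT(F)}: an edge belongs to $CDT(I)$ if and only if it is in $E$ or there is a witness circle through its endpoints whose interior $V$-vertices are each blocked from at least one endpoint. For a non-constraint edge $\overline{uv} \in CGG(I)$, Def.~\ref{def:locallyGabriel} says precisely that the circle with $\overline{uv}$ as diameter is such a witness; hence $\overline{uv} \in CDT(I)$.

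The main obstacle is the single inclusion $CMST(F) \subseteq CRNG(F)$, the only link mixing a weight-minimum object with a geometric neighborhood. For a non-constraint edge $\overline{uv} \in CMST(F)$, I would suppose for contradiction that $\overline{uv} \notin CRNG(F)$. Then the lune $L_{u,v}$ contains some $p \in V$ visible from both $u$ and $v$ with $|\overline{up}|, |\overline{vp}| < |\overline{uv}|$, so that both $\overline{up}$ and $\overline{vp}$ appear as edges of $CVG(F)$, each with weight at most its Euclidean length. Removing $\overline{uv}$ splits $CMST(F)$ into two components $(U, V \setminus U)$ with $u$ and $v$ on opposite sides; depending on which side contains $p$, exactly one of $\overline{up}$ or $\overline{vp}$ crosses this cut, and reinserting that edge produces a spanning tree of $CVG(F)$ of strictly smaller total weight, contradicting the minimality of $CMST(F)$. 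The subtle point is that the replacement edge must actually be present as an edge of $CVG(F)$, which is exactly what the visibility condition built into Def.~\ref{def:CRNG(I)} guarantees.
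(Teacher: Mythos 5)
Your proposal is correct and follows essentially the same route as the paper: the same four-way decomposition, the monotonicity of $U_{u,v}(\beta)$ in $\beta$ for the middle inclusions, the diameter circle of a locally Gabriel edge serving as a Delaunay witness circle for $CGG \subseteq CDT$, and the same edge-exchange contradiction (replacing $\overline{uv}$ by the shorter visible edge $\overline{up}$ or $\overline{pv}$ across the cut) for $CMST(F) \subseteq CRNG(F)$. The only cosmetic difference is that you explicitly invoke the Lee--Lin witness-circle characterization where the paper just says ``by Delaunay criterion.''
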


\begin{proof}
Let us first prove the second statement of the lemma. Assume that we are given a plane graph $I = (V, E)$. We start by proving that $CGG(I) \subseteq CDT(I)$. Let $\overline{uv}$ be an edge of $CGG(I)$. If $\overline{uv} \in E$ then $\overline{uv} \in CDT(I)$ by definition. If $\overline{uv} \notin E$ then $\overline{uv}$ is locally Gabriel, meaning that the circle $C_{u,v}$ with $\overline{uv}$ as a diameter is empty of points of $V$ visible to both $u$ and $v$. By Delaunay criterion $\overline{uv}$ is a Delaunay edge and thus $\overline{uv} \in CDT(I)$. 

Let us prove, that $CG_{\beta_2}(I) \subseteq CG_{\beta_1}(I)$ for $1 \leq \beta_1 \leq \beta_2 \leq 2$. Let $e = (u,v)$ be an arbitrary edge of $CG_{\beta_2}(I)$. If $e \in I$ then by definition $e \in CG_{\beta_1}(I)$, since $e \in I \subseteq CG_{\beta_1}(I)$. If $e \notin I$ then $e$ is not a constraint in $CG_{\beta_2}(I)$ and thus $U_{u,v}(\beta_2)$ does not contain points in $V$ visible to both $u$ and $v$. By construction, $U_{u,v}(\beta_1) \subseteq U_{u,v}(\beta_2)$, refer to Fig.~\ref{fig:beta_eye}. Thus, $U_{u,v}(\beta_1)$ is also empty of points of $V$ visible to both $u$ and $v$. It follows that $e \in CG_{\beta_1}(I)$. We proved the following:
$CRNG(I) = CG_{\beta = 2}(I) \subseteq CG_{1 \leq \beta \leq 2}(I) \subseteq CG_{\beta = 1}(I) = CGG(I)$. We are left to prove the first statement of the lemma. Assume that we are given a plane forest $F = (V, E)$. Since $F$ is a plane graph, the second statement of the lemma is true for $F$, so we only need to prove that $CMST(F) \subseteq CRNG(F)$. Let $e = (u,v)$ be an edge of $CMST(F)$. If $e \in F$ then by definition $e \in CRNG(F)$, since $e \in F \subseteq CRNG(F)$. If $e \notin F$ then $e$ is not a constraint in $CMST(F)$. Assume to the contrary that $e \notin CRNG(F)$, meaning that the lune $L_{u,v}$ is not ``empty". It implies that there is a point $p \in L_{u,v}$ visible to both $u$ and $v$. Refer to Fig.~\ref{fig:triangle_is_empty0}. By construction $|pu| < |vu|$ and $|pv| < |vu|$. Removal of the edge $e$ from $CMST(F)$ disconnects the tree. Assume, without loss of generality, that $v$ and $p$ are in the same connected component of $CMST(F) \setminus \{e\}$. The graph $CMST(F) \setminus \{e\} \cup \{\overline{up}\}$ is a tree. Moreover, $w(CMST(F)) > w(CMST(F) \setminus \{e\} \cup \{\overline{up}\})$. This is a contradiction to $CMST(F)$ being a constrained minimum spanning tree of $F$ and thus $e \in CRNG(F)$.
\qed
\end{proof}

\begin{lemma}
\label{lem:constraint_travel_main}
Let $S_G$ denote the the minimum set of constraints of $G$. Given a plane graph $I = (V, E)$ and $1 \leq \beta \leq 2$, $S_{CRNG(I)} \supseteq  S_{CG_\beta(I)} \supseteq S_{CGG(I)} \supseteq S_{CDT(I)}$. Given a plane forest $F = (V, E)$ and $1 \leq \beta \leq 2$, $S_{CMST(F)} \supseteq S_{CRNG(F)} \supseteq S_{CG_\beta(F)} \supseteq S_{CGG(F)} \supseteq S_{CDT(F)}$.
\end{lemma}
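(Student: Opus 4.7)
The plan is to handle the chain one link at a time. Because every plane forest is also a plane graph, the four inclusions of the graph-case statement apply verbatim to the forest case, so only the top link $S_{CMST(F)}\supseteq S_{CRNG(F)}$ needs a separate argument tailored to constrained minimum spanning trees.

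For the three inclusions $S_{CRNG(I)}\supseteq S_{CG_\beta(I)}\supseteq S_{CGG(I)}$ my plan is to use a uniform ``visible witness'' characterization of the minimum set: arguing as in Lemma~\ref{lem:S-notLocallyGabriel} (and, for $\beta$-skeletons, via Lemma~\ref{lem:beta_alg_triangle_elimination_path_contracted}), an edge $e=\overline{uv}\in E$ belongs to $S_{CGG(I)}$, to $S_{CG_\beta(I)}$, or to $S_{CRNG(I)}$ exactly when some vertex of $V$ inside $D_{u,v}=U_{u,v}(1)$, inside $U_{u,v}(\beta)$, or inside $L_{u,v}=U_{u,v}(2)$, respectively, is visible to both $u$ and $v$ with respect to $E$. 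Since $U_{u,v}(\beta)$ grows monotonically with $\beta$, a witness for the smaller region is automatically a witness for the larger region, and each of the three inclusions follows.

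For $S_{CGG(I)}\supseteq S_{CDT(I)}$ I would use the inscribed-angle theorem. Let $e=\overline{uv}\in S_{CDT(I)}$, i.e., $e$ is not locally Delaunay in $CDT(I)$, and let $w_1,w_2$ be the apexes of the two triangles of $CDT(I)$ incident to $e$. Failure of the local Delaunay condition implies $\angle u w_1 v+\angle u w_2 v>\pi$, so at least one apex, say $w_1$, satisfies $\angle u w_1 v>\pi/2$ and hence lies in the Gabriel disk $D_{u,v}$. Because $\overline{uw_1}$ and $\overline{vw_1}$ are edges of the triangulation $CDT(I)$, no edge of $E\subseteq CDT(I)$ can cross them, so $w_1$ is visible to both $u$ and $v$ with respect to $E$. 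Thus $e$ is not locally Gabriel and, by Lemma~\ref{lem:S-notLocallyGabriel}, $e\in S_{CGG(I)}$.

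The final link $S_{CMST(F)}\supseteq S_{CRNG(F)}$ is the part I expect to be the main obstacle, because $S_{CMST(F)}$ is not characterized by a pointwise local condition but through the cycle-cutting reasoning of Algorithm~\ref{alg_CMST}. My plan is a short argument by contradiction: assume $e=\overline{uv}\in S_{CRNG(F)}\setminus S_{CMST(F)}$. Then $e\in F\subseteq CMST(V,S_{CMST(F)})$ by Lemma~\ref{lem:CMST=T}, and applying the already-proved inclusion $CMST\subseteq CRNG$ from Lemma~\ref{lem:Hierarchy} to the plane graph $(V,S_{CMST(F)})$ yields $e\in CRNG(V,S_{CMST(F)})$. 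Since $e\notin S_{CMST(F)}$, $e$ must be locally RNG with respect to $(V,S_{CMST(F)})$, so $L_{u,v}$ contains no vertex of $V$ visible to both $u$ and $v$ with respect to $S_{CMST(F)}$. On the other hand, $e\in S_{CRNG(F)}$ supplies a point $p\in L_{u,v}$ visible to $u$ and $v$ with respect to $E$, and since $S_{CMST(F)}\subseteq E$ shrinking the obstacle set can only enlarge the visibility relation, $p$ is also visible with respect to $S_{CMST(F)}$, a contradiction.
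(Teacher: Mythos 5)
Your proof is correct and follows essentially the same route as the paper: the three middle inclusions come from the monotonicity of $U_{u,v}(\beta)$ in $\beta$ together with the visible-witness characterization of the minimum constraint sets, and $S_{CGG(I)}\supseteq S_{CDT(I)}$ comes from the obtuse apex of a triangle of $CDT(I)$ adjacent to a non-locally-Delaunay edge, which lies in the disk $C_{u,v}$ and is visible to both endpoints. For the final link the paper performs the edge-exchange directly (swapping $e$ for $\overline{up}$, where $p$ is the lune witness, which remains visible because $S_{CMST(F)}\subseteq E$), whereas you reach the same contradiction by invoking the containment $CMST\subseteq CRNG$ of Lemma~\ref{lem:Hierarchy} on the reduced obstacle set $(V,S_{CMST(F)})$ — an equivalent argument, since that containment is itself established by the identical exchange.
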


\begin{proof}
First we prove that $S_{CDT(I)} \subseteq S_{CGG(I)}$. Let $e = (u,v)$ be a constraint in $CDT(I)$, namely, $e \in S_{CDT(I)}$. It follows, that $e \in I$ and $e$ is not locally Delaunay and thus there exist a pair of triangles in $CDT(I)$: $\triangle(u,v,p)$ and $\triangle(u,v,q)$, such that $C_{\triangle(u,v,p)}$ contains $q$ or $C_{\triangle(u,v,q)}$ contains $p$, where $C_\triangle$ is a circle circumscribing specified triangle. For example, refer to Fig.~\ref{fig:constraints_travel_1}, where $C_{\triangle(u,v,p)}$ contains $q$. 

\begin{figure}[h]
  \begin{minipage}[c]{0.32\textwidth}
    \includegraphics[width=0.96\textwidth]{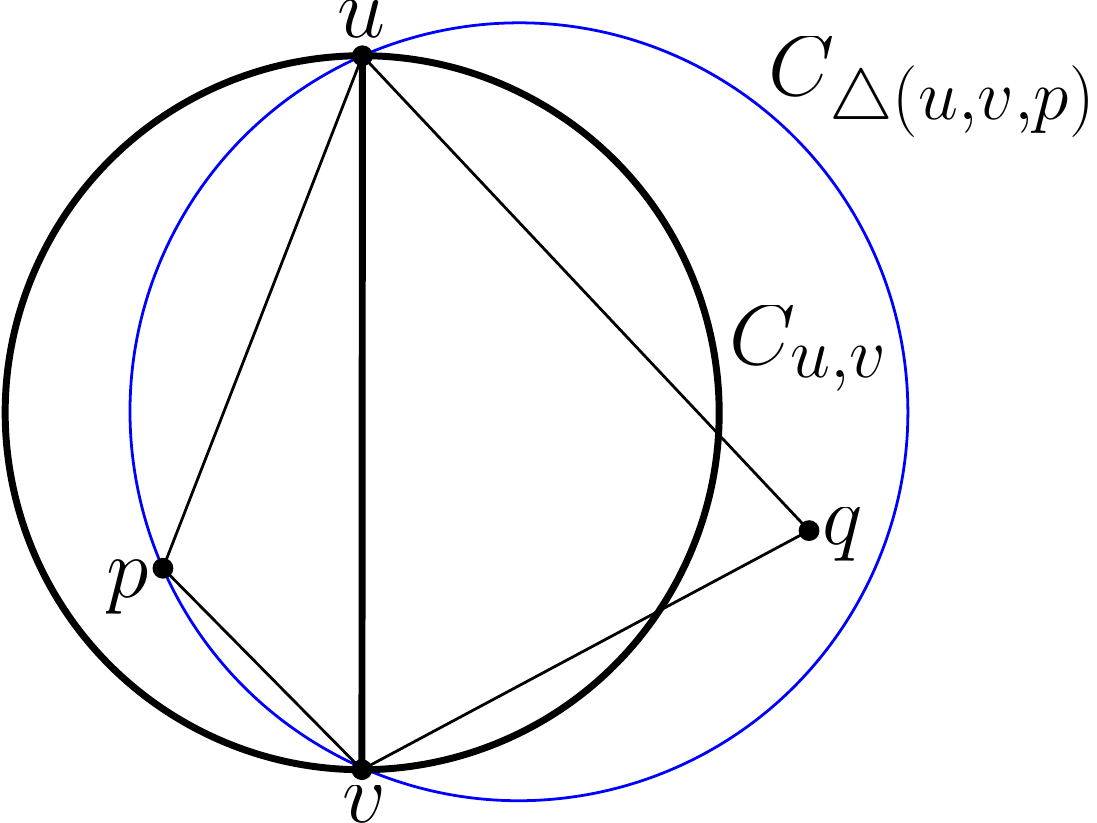}
  \end{minipage}\hfill
  \begin{minipage}[c]{0.6\textwidth}
\caption{$\triangle(u,v,p)$ and $\triangle(u,v,q)$ are triangles of $CDT(I)$. The circle $C_{\triangle(u,v,p)}$ contains $q$ - the edge $\overline{uv}$ is not locally Delaunay.}
\label{fig:constraints_travel_1}
  \end{minipage}
\end{figure}

Consider the quadrilateral $\{ u,p,v,q\}$. At least one of the angles $\angle vpu$ or $\angle uqv$ is obtuse. The following is true for the apex of the obtuse angle:
\begin{enumerate}
 \item It is contained in the circle $C_{u,v}$ with the edge $e$ as a diameter.
 \item It is visible to both $u$ and $v$ in $CDT(I)$, since together with $u$ and $v$ it creates a triangle that belongs to $CDT(I)$. It is also visible to both $u$ and $v$ in $CGG(I)$, because $CGG(I) \subseteq CDT(I)$, refer to Lemma~\ref{lem:Hierarchy}. 
 \end{enumerate} 
The above makes the edge $e$ not locally Gabriel. Since $e \in I$, it must belong to $CGG(I)$, and thus $e \in S_{CGG(I)}$. Therefore $S_{CDT(I)} \subseteq S_{CGG(I)}$.

Notice that $S_{CG_{\beta_1}(I)} \subseteq S_{CG_{\beta_2}(I)}$ for $1 \leq \beta_1 \leq \beta_2 \leq 2$ since $U_{u,v}(\beta_1) \subseteq U_{u,v}(\beta_2)$ for every pair of vertices $u,v \in I$ (refer to Fig.~\ref{fig:beta_eye}). This completes the proof of the first statement of the lemma, which also holds if $I$ is a forest $F$. Thus, we only left to prove that, given a plane forest $F = (V, E)$, $S_{CRNG(F)} \subseteq S_{CMST(F)}$. Let $e = (u,v)$ be an edge in $S_{CRNG(F)}$, meaning that $e \in F$ and $e$ is a constraint in $CRNG(F)$. Thus, there exists a point $p$ visible to both $u$ and $v$ in $CRNG(F)$, such that $p \in L_{u,v}$. By construction $|pu| < |vu|$ and $|pv| < |vu|$. Refer to Fig.~\ref{fig:triangle_is_empty0}. Since $e \in F$ we have $e \in CMST(F)$. Assume to the contrary, that $e \in CMST(F)$ but \textbf{not} a constraint in $CMST(F)$. Removal of $e$ from $CMST(F)$ disconnects the tree. Assume, without loss of generality, that $v$ and $p$ are in the same connected component of $CMST(F) \setminus \{e\}$. The graph $CMST(F) \setminus \{e\} \cup \{\overline{up}\}$ is a tree and its weight is smaller then the weight of $CMST(F)$. This is a contradiction to $CMST(F)$ being a constrained minimum spanning tree of $F$ and thus $e$ must be assigned weight $0$ and be a constraint in $CMST(F)$.
\qed
\end{proof}

%
%

\bibliography{CMST-llncs}

\end{document}